\newcommand{\R}{\mathbb R}
\newcommand{\C}{\mathbb C}
\newcommand{\bl}{\bm{\lambda}}
\newcommand{\pwcc}{\Delta_{\phi}}
\newcommand{\br}{\bm{\rho}}
\newcommand{\bx}{\vb{x}}
\newcommand{\by}{\vb{y}}
\newcommand{\bz}{\vb{z}}
\newcommand{\tl}{\tilde{\lambda}}
\newcommand{\tz}{\tilde{z}}
\newcommand{\aconst}{A_{\phi}}
\newcommand{\pconst}{P_{\phi}}
\newcommand{\sconst}{B}
\newcommand{\lsconst}{C}
\newcommand{\hcconst}{R}
\newcommand{\set}[1]{\left\{#1\right\}}
\newcommand{\indicator}[1]{\mathds{1}_{#1}}
\numberwithin{equation}{section}
\theoremstyle{plain}
\newtheorem{theorem}{Theorem}[section]
\newtheorem{lemma}[theorem]{Lemma}
\newtheorem{proposition}[theorem]{Proposition}
\newtheorem{corollary}[theorem]{Corollary}
\theoremstyle{definition}
\newtheorem{definition}[theorem]{Definition}
\theoremstyle{remark}
\newtheorem{claim}[theorem]{Claim}
\newtheorem{remark}[theorem]{Remark}
\begin{document}

\title{Analyticity for locally stable hard-core gases via recursion}
\author{Qidong He}
\affil{Department of Mathematics, Rutgers University}
\date{}

\maketitle

\abstract

In their recent works [Comm. Math. Phys. \textbf{399}:1 (2023)] and [arXiv:2109.01094], Michelen and Perkins proved that the pressure of a system of particles with repulsive pair interactions is analytic for activities up to $e\Delta_{\phi}(\beta)^{-1}$, where $\Delta_{\phi}(\beta)\in(0,C_{\phi}(\beta)]$ is a constant they called the potential-weighted connective constant. This paper extends their method to locally stable, tempered, and hard-core pair potentials. Our main result is that the pressure of such a system is analytic for activities up to $e^{2-2W(eA_{\phi}(\beta)/\Delta_{\phi}(\beta))}\Delta_{\phi}(\beta)^{-1}e^{-(\beta C+1)}$, where $C\ge0$ is the local stability constant, $W(\cdot)$ the Lambert $W$-function, $A_{\phi}(\beta)$ the contribution from the attraction in the pair potential to the temperedness constant, and $\Delta_{\phi}(\beta)\in[A_{\phi}(\beta),C_{\phi}(\beta)]$ a counterpart of the constant defined by Michelen and Perkins. The main ingredients in the proof include a recursive identity for the one-point density tailored to locally stable hard-core potentials and a corresponding notion of modulations of an activity function. In the high-temperature regime, our result surpasses the classical Penrose-Ruelle bound of $C_{\phi}(\beta)^{-1}e^{-(\beta C+1)}$ by at least a factor of $e^{2}$.

\tableofcontents

\section{Introduction}

One of the central objectives of equilibrium statistical mechanics is understanding the nature of phase transitions.
While phase transitions are abundantly observed and utilized in the real world, producing a rigorous mathematical proof of their existence in a continuum model with realistic interactions has largely remained an open problem \cite{lebowitz1999liquid,lewin2015crystallization,widom1970new}.
Here, we focus on the problem of proving the \emph{absence} of phase transitions.
We do this within the framework of the Lee-Yang theory, which associates non-analytic points of the infinite-volume thermodynamic potentials of a system to the onset of phase transitions \cite{yang1952statistical}.
A multitude of existing results in this direction are based on the series expansion method, which proves analyticity (hence the absence of phase transitions) by explicitly writing the thermodynamic potential of interest as a power series and demonstrating its convergence \cite{fernandez2007analyticity,mayer1941molecular,penrose1963convergence,procacci2017convergence,ruelle1963correlation}.
However, the region of convergence for these series often encompasses more than is physically necessary and can therefore face non-physical obstructions, e.g., singularities on the negative real-axis for activity expansions of the pressure in repulsive particle models (whereas only the positive activities are physically meaningful) \cite{achlioptas2021lov,groeneveld1962two}.
To avoid the non-physical singularities, we adopt the analytic method developed by Michelen and Perkins  \cite{michelen2021potential,michelen2023analyticity} that allows one to work directly in a thin complex neighborhood of the positive real axis.
We show that, for systems of hard-core particles with locally stable pair interactions, this method leads to a temperature- and interaction-dependent improvement over the classical bound on the radius of convergence of the series expansion computed by Ruelle \cite{ruelle1963correlation} and Penrose \cite{penrose1963convergence}.
In particular, the improvement is most pronounced in the high-temperature regime and apparently matches, in the infinite-temperature limit, the state-of-the-art bounds for pure hard-core systems due to Michelen and Perkins \cite{michelen2021potential,michelen2023analyticity}.

\subsection{The model}
\label{sec:introduction_model}

In this paper, we consider a system of classical particles in $\R^{d}$, interacting via a radially symmetric, translation-invariant pair potential $\phi:\R^{d}\rightarrow\R\cup\set{\infty}$.
We will assume that $\phi$ is \emph{locally stable}, \emph{tempered}, and has a hard core:
\begin{enumerate}
\item (local stability) There exists a constant $\lsconst\ge0$ such that, for all $y\in\R^{d}$, $n\ge 1$, and sets of particles $\bm{x}=\set{x_{1},\dots,x_{n}}\subset\R^{d}$ with finite potential energy 
\begin{equation}
\label{eqn:introduction_potential-energy}
U(\bm{x}):=\sum_{i<j}\phi(x_{i}-x_{j})<\infty,
\end{equation} 
the potential energy between $y$ and $\bm{x}$ is uniformly bounded below:
\begin{equation}
\label{eqn:introduction_local-stability}
\sum_{i=1}^{n}\phi(x_{i}-y)\ge -\lsconst.
\end{equation}
\item (temperedness) The pair potential decays over long distances in the sense that
\begin{equation}
\label{eqn:introduction_temperedness}
C_{\phi}(\beta):=\int_{\R^{d}}\dd{w}\abs{1-e^{-\beta\phi(w)}}<\infty,
\end{equation}
where $\beta$ is the inverse temperature.
The property \eqref{eqn:introduction_temperedness} is originally termed ``regularity'' in \cite[\S4.1]{ruelle1969statistical}; here, we follow the naming convention in \cite{michelen2023analyticity}.
\item (hard core) There exists $\hcconst>0$ such that $\phi(w)=\infty$ for $\abs{w}<\hcconst$ and is finite otherwise.
\end{enumerate}

Strictly speaking, the hard-core assumption is redundant because a radially symmetric pair potential with an attractive part cannot be locally stable \emph{unless} it has a hard core.
Nevertheless, it is convenient to make the presence of the hard core explicit for the sake of computation in this paper.

We comment that it is not possible to deduce that a (radially symmetric) pair potential $\phi$ is locally stable \emph{backwards} from the temperedness and hard-core assumptions.
The reason is that the integrability criterion \eqref{eqn:introduction_temperedness} for temperedness that we use here does not guarantee \emph{pointwise} long-range decay of the potential, whereas local stability \eqref{eqn:introduction_local-stability} is intrinsically a pointwise property.
For instance, we can easily obtain a counterexample by modifying any tempered, hard-core potential $\phi$ so that $\phi=-1$ on the concentric spherical shells $\partial B_{nR}(0)$, where $n\ge 1$.

Finally, we note that the setting we work in is not the most general one in which the statistical mechanical theory of particles has been historically formulated and studied.
As was established in the classic reference by Ruelle \cite{ruelle1969statistical}, for a classical ensemble to yield physically meaningful predictions, it suffices to require that the interaction be stable and tempered: a pair potential $\phi$ is \emph{stable} if the potential energy of every finite set of particles is bounded below by a (fixed) constant multiple of the number of particles.
Our assumption of local stability is easily seen to be a stronger requirement than mere stability: assuming without loss of generality that the set $\bm{x}=\set{x_{1},\dots,x_{n}}$ has finite potential energy, we get, as in \cite[\S3.2.5]{ruelle1969statistical}, that
\begin{equation}
\label{eqn:introduction_stability}
U(\bm{x})=\frac{1}{2}\sum_{i=1}^{n}\sum_{j\ne i}\phi(x_{j}-x_{i})\ge -\frac{\lsconst}{2}n.
\end{equation}
Thus, an interesting direction for future work would be to replace our assumption of local stability with that of stability, although we conjecture that this may be especially challenging to achieve (see \S\ref{sec:introduction_model_local-stability}).

\subsection{Main result}

From here on, we absorb the inverse temperature $\beta$ into the pair potential $\phi$ and suppress the former from the notation.
For a bounded, Lebesgue measurable region $\Lambda\subset\R^{d}$, the grand-canonical partition function at activity $\lambda\ge 0$ is 
\begin{equation}
\label{eqn:introduction_partition-function}
Z_{\Lambda}(\lambda):=1+\sum_{n=1}^{\infty}\frac{\lambda^{n}}{n!}\int_{\Lambda^{n}}\dd{x}_{1}\dots\dd{x}_{n}e^{-U(x_{1},\dots,x_{n})}.
\end{equation}
The infinite-volume pressure of the system is given by
\begin{equation}
p_{\phi}(\lambda):=\lim_{\Lambda\uparrow\R^{d}}\frac{1}{\abs{\Lambda}}\log Z_{\Lambda}(\lambda),
\end{equation}
where the convergence $\Lambda\uparrow\R^{d}$ is in the sense of Fisher \cite{fisher1964free}; a proof of the existence of this limit can be found in \cite[\S3]{ruelle1969statistical}.
The main result of this paper is as follows.

\begin{theorem}
\label{thm:analyticity}
Let $\phi$ be a translation-invariant, radially symmetric, locally stable, and tempered pair potential (with a hard core).
Denote by
\begin{equation}
\label{eqn:attraction-constant}
\aconst:=\int_{\set{w\in\R^{d}\mid\phi(w)<0}}\dd{w}(e^{-\phi(w)}-1)
\end{equation}
the contribution from the attraction in the pair potential to the temperedness constant $C_{\phi}$ (cf. \eqref{eqn:introduction_temperedness}), and let $\pwcc$ be as defined in \eqref{def:tree-recursions_pwcc} (for now, it suffices to note that $\pwcc$ is a $\phi$-dependent constant in the interval $[\aconst,C_{\phi}]$).
Then, the infinite-volume pressure $p_{\phi}(\lambda)$ is analytic for 
\begin{equation}
\lambda\in\left[0,e^{2-2W(e\aconst/\pwcc)}\pwcc^{-1}e^{-(\lsconst+1)}\right),
\end{equation}
where $W(\cdot)$ is the Lambert $W$-function.
\end{theorem}

\begin{remark}
The ingredients needed for computing $\pwcc$ include 
\begin{enumerate}
\item the definition \eqref{eqn:tree-recursions_good-sequences} of ``good sequences'' in $\R^{d}$ of finite length, 
\item the construction of the reference modulating function $\vec{\gamma}_{c}$ in Definition \ref{def:tree-recursions_canonical-modulating-function}
\item the expression of $V_{k,\vb{0}}(\vec{\gamma}_{c})$ for $k\ge 1$ using \eqref{cor:contraction-properties_real-contraction_Vk}, and 
\item the definition of $\pwcc$ as an infimum over the suitable $V_{k,\vb{0}}(\vec{\gamma}_{c})$'s in Definition \ref{def:tree-recursions_pwcc}.
\end{enumerate}
We will show in Lemma \ref{lem:contraction-properties_real-contraction_pwcc-upper-bound} that $V_{1,\vb{0}}(\vec{\gamma}_{c})=C_{\phi}$, so it is always the case that $\pwcc\le C_{\phi}$.
Determining $V_{k,\vb{0}}(\vec{\gamma}_{c})$ for interesting locally stable potentials and $k\ge 2$ seems to be a daunting task generally inaccessible to manual computation.
To make matters worse, in contrast to the purely repulsive case \cite[Lemma 9]{michelen2023analyticity}, we have no theoretical guarantee at this moment that computing $V_{k,\vb{0}}(\vec{\gamma}_{c})$ for larger $k$'s will lead to arbitrarily tight bounds on $\pwcc$.
However, computing $V_{2,\vb{0}}(\vec{\gamma}_{c})$ for the hard-rod model in 1D with an attractive long-range Kac interaction shows that $\pwcc<C_{\phi}$ for all inverse temperature $\beta$ in that case.
We leave a more in-depth study of the constant $\pwcc$ as a possible direction for future work.
\end{remark}

\subsubsection{Comparison with existing bounds}
\label{sec:introduction_model_comparison}

Before discussing the proof strategy, let us briefly compare our bound to several existing ones in the literature.
Since the dependence of $\pwcc$ on the potential $\phi$ and the inverse temperature $\beta$ (which is absorbed into $\phi$) is complicated in general (see Definition \ref{def:tree-recursions_pwcc}), we will only consider below the worst-case performance of our bound, i.e., when $\pwcc=C_{\phi}$, for the sake of argument.

We consider first the general class of stable, tempered pair potentials.
A classical result of Ruelle \cite{ruelle1963correlation} and Penrose \cite{penrose1963convergence}, obtained from a study of the Kirkwood-Salzburg equations, states that the activity expansion of the pressure converges in the disk 
\begin{equation}
\label{eqn:introduction_Penrose-Ruelle}
\set{\lambda\in\C\mid\abs{\lambda}\le C_{\phi}^{-1}e^{-(2\sconst+1)}},
\end{equation} 
where $\sconst\ge 0$ is the stability constant.
In the case that stability is replaced by local stability, taking $\sconst=\lsconst/2$ as in \eqref{eqn:introduction_stability}, we see that our result outperforms the Penrose-Ruelle bound by a factor of (at least) $e^{2-2W(e\aconst/C_{\phi})}\in(1,e^{2}]$.

The state-of-the-art lower bound on the radius of convergence of the activity expansion appears to be due to Procacci and Yuhjtman \cite{procacci2017convergence}: through careful manipulations of the graphs that arise in the expansion, they prove convergence in the disk
\begin{equation}
\set{\lambda\in\C\mid\abs{\lambda}\le \hat{C}_{\phi}^{-1}e^{-(\sconst+1)}},
\end{equation} 
where the so-called \emph{weak temperedness contant}
\begin{equation}
\hat{C}_{\phi}:=\int_{\R^{d}}\dd{w}(1-e^{-\abs{\phi(w)}})
\end{equation} 
replaces the temperedness constant $C_{\phi}$ in the Penrose-Ruelle bound \eqref{eqn:introduction_Penrose-Ruelle}.
Their improvement over the latter is most prominent in the low-temperature regime due to the factor of $e^{\sconst}$, which they comment is thanks to their ability to ``use in an optimal way the stability condition of the pair potential.''
They also remark that their improvement is ``much less sensitive at high temperatures,'' in which case $e^{\sconst}\approx1$ and $\hat{C}_{\phi}\approx C_{\phi}$, so they recover the Penrose-Ruelle bound.
In comparison, our gain factor of (at least) $e^{2-2W(e\aconst/C_{\phi})}$ over the latter is close to $e^{2}$ in the high-temperature regime ($\beta\ll 1$), where the hard-core contribution dominates the temperedness constant $C_{\phi}$, so our bound outperforms theirs at high temperatures.
We should also note that the more intricate temperature dependence of $\pwcc$ as compared to $C_{\phi}$ can lower the temperature threshold for our bound to be an improvement over theirs, but this will depend strongly on the detail of the potential $\phi$.

The factor of $e^{-\lsconst}$ in our result originates from Lemma \ref{lem:one-point-densities_uniform-boundedness}, where we use local stability to prove the uniform boundedness of an infinite family of activity functions called \emph{modulations} (of a given activity function).
As we will explain in \S\ref{sec:introduction_model_local-stability}, this is an essential step in the argument with no obvious way around.
Thus, the improvement factor of $e^{\sconst}$ in \cite{procacci2017convergence} seems out of reach without significant new insight.

For the subclass of repulsive (and hard-core) pair potentials, $\lsconst=0$ and $\aconst=0$, and our result shows analyticity for positive activities up to (at least) $eC_{\phi}^{-1}$, coinciding with that obtained by Michelen and Perkins in \cite{michelen2021potential}. 
We refer the reader to the discussions in \cite{michelen2021potential,michelen2023analyticity} for comparisons of this bound to other existing results in the literature for repulsive potentials.
As the present paper is based on the method that they developed, readers who are familiar with their works might wonder why the bound we obtain in this paper for locally stable systems has an uncanny temperature dependence (i.e., the factor of $e^{2-2W(e\aconst/\pwcc)}$).
We will argue in \S\ref{sec:zero-freeness_origin} that this dependence stems from a trade-off between the desired contraction property and a necessary self-map condition in analyzing the recursion that lies at the heart of the method.
In short, for systems with attractive interactions, the self-map condition eats away the range of activities where the contraction property holds, an effect that becomes dramatically more severe at lower temperatures.
On the other hand, this adverse interplay is largely invisible for purely repulsive systems, hence Michelen and Perkins' temperature-independent gain factor of $e^{2}$ over the classical bound of Groeneveld \cite{groeneveld1962two}.

Finally, we note that the method of Michelen and Perkins \cite{michelen2021potential,michelen2023analyticity}, on which the present paper is based, has roots in theoretical computer science \cite{weitz2006counting}, and so our result may have algorithmic implications that could be interesting to explore \cite{jenssen2024quasipolynomial,michelen2022strong,sinclair2017spatial}.
We refer the reader to \cite[\S1.4]{michelen2023analyticity} and the references therein for details.

\subsubsection{Overview of strategy}
\label{sec:introduction_model_strategy}

Our strategy for proving Theorem \ref{thm:analyticity} combines ideas from both the original paper of Michelen and Perkins \cite{michelen2021potential} and their follow-up work \cite{michelen2023analyticity}.
We provide here an outline of the strategy, which readers familiar with the works of Michelen and Perkins on purely repulsive systems might recognize.
We also indicate where new ideas are introduced in order to bring the strategy to fruition in the case of locally stable interactions.

To prove the analyticity of the infinite-volume pressure, we will work with complex-valued activity functions $\bl$ rather than the real, constant activity parameter $\lambda$ in terms of which the partition function \eqref{eqn:introduction_partition-function} is presently defined.
Our starting point is an expression (see Lemma \ref{lem:one-point-densities_log-Z}) of the log of the partition function $Z_{\Lambda}(\bl)$ in terms of an integral of certain one-point densities.
To evaluate these one-point densities, we will use an integral expression (see Theorem \ref{thm:recursion}) of the one-point density of any activity function $\bl$ in terms of the densities of other activity functions derived from $\bl$.
This enables us to compute the one-point densities that appear in the identity for $\log Z_{\Lambda}(\bl)$ \emph{recursively}: computing the initial one-point densities requires computing another set of one-point densities, which requires computing yet another set of one-point densities, and so forth.
In this way, the recursion generates an infinite hierarchy of activity functions which is up for further analysis.

The ideas so far are based on \cite[\S2]{michelen2021potential}.
However, as we will explain in \S\ref{sec:introduction_model_local-stability}, we already have to make an important modification to the recursive identity proven in \cite[Theorem 8]{michelen2021potential}.
In the language of \cite[\S4.1]{michelen2023analyticity}, the damping function that our identity generates depends explicitly on the distance between the root of a tree recursion and its children: it only takes a similar form to that in \cite{michelen2021potential} if this distance exceeds the hard-core radius $\hcconst$ (see \eqref{eqn:one-point-densities_modulation}).
This is an essential refinement of the treatment in \cite{michelen2023analyticity} that allows us to take advantage of local stability later on.

To analyze the hierarchy of activity functions, we will follow the strategy in the follow-up paper \cite{michelen2023analyticity} of Michelen and Perkins.
We note that the analysis in both their papers is based on showing that the recursive identity generating the hierarchy satisfies a contraction property at low activities.
The difference is that whereas their original paper considers the identity as is, their follow-up paper allows for the possibility that some (high) iteration of it is a contraction even though one iteration might not be.
On a high level, this is similar to a well-known extension of the Banach contraction principle due to Bryant \cite{bryant1968remark}, who noted that the usual conclusions of the Banach contraction principle still hold for an operator $T$ if only a certain power of $T$ is a contraction.

To characterize high iterations of our recursive identity, we will adopt the formalism of finite-depth tree recursions introduced in \cite[\S4.1]{michelen2023analyticity}.
This requires an original construction of the \emph{damping functions} in the tree recursions because our recursion differs from theirs.
There is also a more serious question regarding to which family of activity functions we should apply the tree recursions.
For the later arguments to work, this family must be uniformly bounded, include all the interpolating activity functions used to prove the key identities (see the assumption of nonzero partition functions in Lemma \ref{lem:one-point-densities_log-Z} and Theorem \ref{thm:recursion}), and be closed under recursion. 
The family generated directly by the recursion from a single activity function is inadequate because it violates the second requirement of including the interpolating activity functions.
In other words, one must expand the above hierarchy of activity functions in order to proceed further.
In \cite{michelen2021potential,michelen2023analyticity}, Michelen and Perkins opted to work with the family of \emph{pointwise contractions} of a given activity function.
As each iteration of the recursion amounts to multiplication by a suitable damping function, which is a pointwise contraction, the product of a damping function and a pointwise contraction remains a pointwise contraction, which thus solves their problem.
Unfortunately, this will not work with attractive pair potentials because the ``damping functions'' that correspond to the recursion are no longer pointwise contractions, so arbitrary products of ``damping functions'' will generally be unbounded.
Instead, we will define the family of \emph{modulations} of a given activity function (see Definition \ref{def:tree-recursions_modulation}), which includes all the interpolating activity functions mentioned before, and tailor the ``damping functions,'' which we call \emph{modulating functions} (see Theorem \ref{thm:tree-recursions_density-correspondence}), such that the product of a modulating function and a modulation remains a modulation (see Claim \ref{clm:tree-recursions_density-correspondence_interpolation} and Lemma \ref{lem:one-point-densities_uniform-boundedness}).
In particular, we will construct the modulations to be uniformly bounded by local stability, which will enable us to study this larger hierarchy of activity functions using tree recursions in the same way as in \cite{michelen2023analyticity}.

The analysis becomes genuinely interesting when we investigate the contraction properties of tree recursions for locally stable potentials.
In the corresponding analysis in \cite{michelen2023analyticity} for purely repulsive systems, a quantity $\pwcc$ they call the \emph{potential-weighted connective constant} naturally arises that ``captures the interplay between the potential and the geometry of the space.''
Remarkably, the quantity $\pwcc$ is defined there with a submultiplicative sequence $V_{k}(\vec{\gamma})$ \cite[Lemma 9]{michelen2023analyticity} using Fekete's lemma, in direct parallel to the definition of the connective constant for lattices \cite{duminil2012connective}.
We will show that while a similar sequence $V_{k}(\vec{\gamma})$ appears in our contraction condition, the submultiplicativity property breaks down at the first whiff of attraction.
Hence, Fekete's lemma does not apply, and there is, sadly, no longer an analogy to the connective constants of lattices.

Another difficulty largely hidden in the analysis of purely repulsive systems is the necessity of introducing a self-map condition to ensure that the same contraction property holds at all places in the hierarchy.
Striking a balance between the self-map condition and the contraction property ultimately leads to the sharp temperature dependence of the activity threshold we compute for locally stable potentials, in contrast to the threshold computed by Michelen and Perkins for repulsive systems which does not feature this dependence.
We have touched upon this briefly in \S\ref{sec:introduction_model_comparison} and will return to it in \S\ref{sec:zero-freeness_origin}.

Once the above technicalities are settled, we will show by interpolation from zero activity that the finite-volume pressures are uniformly bounded for all activities at which both the self-map condition and the (iterated) contraction property hold.
By Vitali's convergence theorem, the analyticity of the infinite-volume pressure follows.

\subsubsection{Role of local stability}
\label{sec:introduction_model_local-stability}

Here, we comment on the role of local stability in this paper.

Throughout the paper, local stability appears most explicitly in Lemma \ref{lem:one-point-densities_uniform-boundedness}, where we use it to deduce that an infinite family of \emph{modulating functions} are uniformly bounded.
This is significant for the following reasons.
As we have mentioned in \S\ref{sec:introduction_model_strategy}, a large part of the paper is devoted to the study of an infinite hierarchy of activity functions \emph{modulated} by these modulating functions.
To this end, the uniform boundedness property allows us to accomplish two things.
First, it allows us to use the uniform continuity of the one-point density (see Lemma \ref{lem:zero-freeness_Z-uniform-continuity}) to control the effect of rescaling the activity function at the top of the hierarchy on the one-point densities of \emph{all} the activity functions located below it.
This strong control is key for the interpolation step in the proof of Theorem \ref{thm:zero-freeness}, where we interpolate from zero activity where the one-point density is known explicitly.
Second, it allows us to apply the contraction properties that we derive (see Corollary \ref{cor:contraction-properties_complex-contraction_final-result}), which holds only for activity functions taking values in a bounded neighborhood in $\C$, to arbitrary activity functions in the hierarchy, provided that the activity function at the top is sufficiently constrained.
This ultimately produces the desired contradiction in the proof of Theorem \ref{thm:zero-freeness}, thus allowing us to deduce analyticity until the contraction property breaks down.

It is important to note that local stability is not as straightforward to use in practice as the above discussion might suggest.
Specifically, applying it requires the ambient configuration $\vb{x}$ to have finite energy (see \eqref{eqn:introduction_potential-energy}), but n\"aively applying the original recursive identity of Michelen and Perkins in this context easily generates ambient configurations that violate the hard-core constraint.
The way out that we propose is to take the complementary hard-core interaction (see \S\ref{sec:introduction_model}) into explicit account by showing that, in all the relevant cases, the placement of one particle during the recursion forbids future addition of particles to its hard-core excluded volume.
Ultimately, this results in a different recursive relation (see Theorem \ref{thm:recursion}) to analyze and a markedly more involved construction of modulating functions in \S\ref{sec:tree-recursions}.

It is noteworthy that our work along with those by Michelen and Perkins \cite{michelen2021potential,michelen2023analyticity} exhaust the set of Gibbs point processes with pair interactions that have bounded Papangelou intensities (which, informally, means that the gain in energy from adding a particle to any given configuration is uniformly bounded from below).
With such a Gibbs point process, a criterion due to Georgii and K\"uneth \cite{georgii1997stochastic} implies stochastic domination by an appropriate Poisson point process (see \cite[Example 2.2]{georgii1997stochastic}).
Poisson domination has played a crucial role in \cite{michelen2021potential,michelen2023analyticity} for proving the uniqueness of Gibbs measures up to the same activity threshold as they have established the pressure to be analytic.
Although we do not require it in the present paper, as we are focused on the analyticity of the pressure (though proving the uniqueness of Gibbs measures for locally stable models up to the same analyticity threshold could be a direction for future work), the modulating functions that we have to work with (see \eqref{eqn:tree-recursions_canonical-modulating-function-good-sequence}) are still easily seen to resemble the Papangelou intensities.
Therefore, the boundedness requirement of the Papangelou intensity appears to be the most significant barrier to extending the contraction method to more general particle systems like those with stable, tempered pair interactions, a program envisioned in \cite{anand2023perfect}.

\bigskip

The rest of the paper is organized as follows.
In \S\ref{sec:one-point-densities}, we introduce complex-valued activity functions and (generalized) one-point densities, and prove two key identities that the one-point densities satisfy.
In \S\ref{sec:tree-recursions}, we import the formalism of tree recursions from \cite{michelen2023analyticity} and construct a version tailored to the recursive identity for locally stable hard-core models.
In \S\ref{sec:contraction-properties}, we study the contraction properties of tree recursions.
Finally, in \S\ref{sec:zero-freeness}, we combine the recursive structure and the contraction properties to prove Theorem \ref{thm:analyticity}.

\section{One-point densities}
\label{sec:one-point-densities}

In this section, we study complex-valued activity functions and their one-point densities.

For a bounded, Lebesgue measurable region $\Lambda\subset\R^{d}$, an \emph{activity function} is a \emph{bounded}, Lebesgue measurable function $\bl:\Lambda\rightarrow\C$.
Its associated grand-canonical partition function is
\begin{equation}
\label{eqn:one-point-densities_partition-function}
Z_{\Lambda}(\bl):=1+\sum_{n=1}^{\infty}\frac{1}{n!}\int_{\Lambda^{n}}\dd{x}_{1}\dots\dd{x}_{n}e^{-U(x_{1},\dots,x_{n})}\prod_{i=1}^{n}\bl(x_{i}).
\end{equation}
We also introduce the \emph{(generalized) one-point density} associated to the activity function $\bl$: if $Z_{\Lambda}(\bl)\ne 0$, we define, for each point $v\in\R^{d}$,
\begin{equation}
\label{eqn:one-point-densities_def}
\rho_{\Lambda,\bl}(v):=\bl(v)\frac{Z_{\Lambda}(\bl e^{-\phi(v-\cdot)})}{Z_{\Lambda}(\bl)}.
\end{equation}
The name \emph{density} alludes to the fact that for any real-valued activity function $\bl$ and Lebesgue measurable subset $A\subseteq\Lambda$, the integral
\begin{equation}
\int_{A}\dd{v}\rho_{\Lambda,\bl}(v)
\end{equation}
is the expected number of particles in $A$ at activity $\bl$.
The requirement that $Z_{\Lambda}(\bl)\ne 0$ in \eqref{eqn:one-point-densities_def} slightly complicates several proofs to follow which are based on the interpolation idea.
To remedy this issue, we will introduce in \S\ref{sec:tree-recursions} a notion of zero-freeness for an activity function that encompasses all the interpolating activity functions we will encounter.

From here on, we fix the region $\Lambda$ on which the activity functions are supported and suppress it from our notation, e.g., we will write $\rho_{\bl}$ instead of $\rho_{\Lambda,\bl}$.

We now present two identities involving the one-point densities.
The first identity, taken directly from \cite{michelen2023analyticity}, expresses the log of the partition function at activity $\bl$ as an integral of the one-point densities of other activity functions obtained from $\bl$.

\begin{lemma}[{\cite[Lemma 7]{michelen2023analyticity}}]
\label{lem:one-point-densities_log-Z}
Let $\bl:\Lambda\rightarrow\C$ be an activity function.
Suppose that $Z(\bl_{s})\ne 0$ for all $s\ge 0$, where $\bl_{s}(\cdot):=\indicator{B_{s}(0)^{c}}(\cdot)\bl(\cdot)$.
Then,
\begin{equation}
\label{eqn:one-point-densities_log-Z}
\log Z(\bl)=\int_{\Lambda}\dd{x}\rho_{\hat{\bl}_{x}}(x),
\end{equation}
where $\hat{\bl}_{x}$ denotes the activity function
\begin{equation}
\hat{\bl}_{x}(\cdot):=\indicator{B_{d(0,x)}(0)^{c}}(\cdot)\bl(\cdot).
\end{equation}
\end{lemma}

\begin{remark}
We prove Lemma \ref{lem:one-point-densities_log-Z} in detail below because the original proof of \cite[Lemma 7]{michelen2023analyticity} contains a flaw: the last line in their rewriting of $Z(\bl_{t})-1$ depends on the parameter $t$ in multiple places, so one cannot use the fundamental theorem of calculus right away to compute $\dv{t}Z(\bl_{t})$.
Nonetheless, our proof confirms that the conclusion of \cite[Lemma 7]{michelen2023analyticity} is correct.
\end{remark}

\begin{proof}
Since $Z(\bl_{0})=Z(\bl)$ and $Z(\bl_{\infty})=Z(0)=1$,
\begin{equation}
\log Z(\bl)=-\int_{0}^{\infty}\dd{s}\dv{s}\log Z(\bl_{s})=-\int_{0}^{\infty}\dd{s}\frac{\dv{s}Z(\bl_{s})}{Z(\bl_{s})}.
\end{equation}
It may be helpful to note that $\bl_{\infty}$ is only a notational convenience and does not cause any convergence issues with the improper integral: since $\bl$ is supported on a bounded region $\Lambda$, $\bl_{s}\equiv 0$ for all sufficiently large $s$, so we could have truncated the integral at some finite but large $s$.
To compute the derivative $\dv{s}Z(\bl_{s})$, we change to spherical coordinates as follows.
\begin{equation}
Z(\bl_{s})=1+\sum_{n=1}^{\infty}\frac{1}{n!}\int_{0}^{\infty}\dd{r}_{1}\dots\int_{0}^{\infty}\dd{r}_{n}\int_{\partial B_{r_{1}}(0)}\dd{x}_{1}\dots\int_{\partial B_{r_{n}}(0)}\dd{x}_{n} e^{-U(\bx)}\prod_{i=1}^{n}\bl_{s}(x_{i}).
\end{equation}
Notice that $\bl_{s}(x_{i})=\indicator{r_{i}\ge s}\bl(r_{i})$ for $x_{i}\in\partial B_{r_{i}}(0)$.
Thus,
\begin{equation}
Z(\bl_{s})=1+\sum_{n=1}^{\infty}\frac{1}{n!}\int_{s}^{\infty}\dd{r}_{1}\dots\int_{s}^{\infty}\dd{r}_{n}\int_{\partial B_{r_{1}}(0)}\dd{x}_{1}\dots\int_{\partial B_{r_{n}}(0)}\dd{x}_{n} e^{-U(\bx)}\prod_{i=1}^{n}\bl(x_{i}).
\end{equation}
Differentiating with respect to $s$, we get that
\begin{equation}
\begin{split}
\dv{s}Z(\bl_{s}){}
&=-\sum_{n=1}^{\infty}\frac{1}{n!}\cdot n\int_{\partial B_{s}(0)}\dd{w}\bl(w)\int_{(\R^{d})^{n-1}}\dd{\vb{x}} e^{-U(\bx,w)}\prod_{i=1}^{n-1}\bl_{s}(x_{i}) \\
{}&=-\int_{\partial B_{s}(0)}\dd{w}\bl(w)\sum_{n=0}^{\infty}\frac{1}{n!}\int_{(\R^{d})^{n}}\dd{\vb{x}} e^{-U(\bx)}\prod_{i=1}^{n}\left[\bl_{s}(x_{i})e^{-\phi(w-x_{i})}\right] \\
{}&=-\int_{\partial B_{s}(0)}\dd{w}\bl(w)Z(\bl_{s}e^{-\phi(w-\cdot)}).
\end{split}
\end{equation}
Therefore,
\begin{equation}
\log Z(\bl)
=\int_{0}^{\infty}\dd{s}\int_{\partial B_{s}(0)}\dd{w}\bl(w)\frac{Z(\bl_{s}e^{-\phi(w-\cdot)})}{Z(\bl_{s})}
=\int_{\R^{d}}\dd{w}\rho_{\hat{\bl}_{w}}(w)
\end{equation}
after noting that $\bl(w)=\hat{\bl}_{w}(w)=\bl_{s}(w)$ for $w\in\partial B_{s}(0)$.
\end{proof}

The second identity is an adaptation of \cite[Theorem 8]{michelen2023analyticity} to the case of locally stable, hard-core potentials and lies at the heart of the whole proof.
It expresses the one-point density of an activity function $\bl$ in terms of the one-point densities of other activity functions obtained from $\bl$.

\begin{theorem}
\label{thm:recursion}
Let $\bl:\Lambda\rightarrow\C$ be an activity function and $v\in\R^{d}$.
Suppose that $Z(\bl_{s})\ne 0$ for all $s\ge 0$, where $\bl_{s}(\cdot):=\left[1+(e^{-\phi(v-\cdot)}-1)\indicator{B_{s}(v)}(\cdot)\right]\bl(\cdot)$.
Then, 
\begin{equation}
\label{eqn:one-point-densities_recursion}
\rho_{\bl}(v)=\bl(v)\exp{-\int_{\R^{d}}\dd{w}\left[1-e^{-\phi(v-w)}\right]\rho_{\bl_{v\rightarrow w}}(w)},
\end{equation}
where $\bl_{v\rightarrow w}$ denotes the activity function
\begin{equation}
\label{eqn:one-point-densities_modulation}
\bl_{v\rightarrow w}(\cdot):=
\begin{cases}
\indicator{B_{d(v,w)}(v)^{c}}(\cdot)\bl(\cdot) & \text{if }d(v,w)<\hcconst \\
\indicator{B_{\hcconst}(v)^{c}}(\cdot)\left[1+(e^{-\phi(v-\cdot)}-1)\indicator{B_{d(v,w)}(v)}(\cdot)\right]\bl(\cdot) & \text{else}
\end{cases}.
\end{equation}
\end{theorem}

\begin{proof}
Note that $Z(\bl_{0})=Z(\bl)$ and $Z(\bl_{\infty})=Z(\bl e^{-\phi(v-\cdot)})$, so $\bl_{s}(\cdot)$ interpolates between $\bl(\cdot)$ and $\bl(\cdot)e^{-\phi(v-\cdot)}$ (where, as in the proof of Lemma \ref{lem:one-point-densities_log-Z}, $\bl_{\infty}$ is only a notational convenience):
\begin{equation}
\log Z(\bl e^{-\phi(v-\cdot)})-\log Z(\bl)
=\int_{0}^{\infty}\dd{s}\frac{\dv{s}Z(\bl_{s})}{Z(\bl_{s})}.
\end{equation}
To evaluate the derivative $\dv{s}Z(\bl_{s})$, we change to spherical coordinates centered around $v$ as follows.
\begin{equation}
\begin{multlined}
Z(\bl_{s})=1+\sum_{n=1}^{\infty}\frac{1}{n!}\int_{0}^{\infty}\dd{r}_{1}\dots\int_{0}^{\infty}\dd{r}_{n}\int_{\partial B_{r_{1}}(v)}\dd{x}_{1}\dots\int_{\partial B_{r_{n}}(v)}\dd{x}_{n} e^{-U(\bx)}\prod_{i=1}^{n}\bl_{s}(x_{i}).
\end{multlined}
\end{equation}

\begin{claim}
\label{clm:one-point-densities_derivative}
For each $n\ge 1$, 
\begin{equation}
\begin{multlined}
\dv{s}\int_{0}^{\infty}\dd{r}_{1}\dots\int_{0}^{\infty}\dd{r}_{n}\int_{\partial B_{r_{1}}(v)}\dd{x}_{1}\dots\int_{\partial B_{r_{n}}(v)}\dd{x}_{n} e^{-U(\bx)}\prod_{i=1}^{n}\bl_{s}(x_{i})
\\
=n\int_{\partial B_{s}(v)}\dd{w}\bl(w)(e^{-\phi(v-w)}-1)\int_{(\R^{d})^{n-1}}\dd{\bx}e^{-U(\bx,w)}\prod_{i=1}^{n-1}\bl_{s}(x_{i}).
\end{multlined}
\end{equation}
\end{claim}

\begin{proof}
Consider the auxiliary function 
\begin{equation}
f_{n}(s_{1},\dots,s_{n})
:=\int_{0}^{\infty}\dd{r}_{1}\dots\int_{0}^{\infty}\dd{r}_{n}\int_{\partial B_{r_{1}}(v)}\dd{x}_{1}\dots\int_{\partial B_{r_{n}}(v)}\dd{x}_{n} e^{-U(\bx)}\prod_{i=1}^{n}\bl_{s_{i}}(x_{i}).
\end{equation}
By the multivariable chain rule and symmetry, we have that
\begin{equation}
\dv{s}f_{n}(s,\dots,s)
=\sum_{i=1}^{n}\eval{\pdv{s_{i}}f_{n}(s_{1},\dots,s_{n})}_{(s,\dots,s)}
=n\eval{\pdv{s_{n}}f_{n}(s_{1},\dots,s_{n})}_{(s,\dots,s)}
\end{equation}
Since $\bl_{s_{n}}(x_{n})=\left[1+(e^{-\phi(v-x_{n})}-1)\indicator{r_{n}<s_{n}}\right]\bl(x_{n})$ for $x_{n}\in\partial B_{r_{n}}(v)$,
\begin{equation}
\begin{multlined}
\pdv{s_{n}}f_{n}(s_{1},\dots,s_{n})
=\pdv{s_{n}}
\int_{0}^{s_{n}}\dd{r}_{n}\int_{\partial B_{r_{n}}(v)}\dd{x}_{n}(e^{-\phi(v-x_{n})}-1)\bl(x_{n})
\\
\times
\int_{0}^{\infty}\dd{r}_{1}\dots\int_{0}^{\infty}\dd{r}_{n-1}
\int_{\partial B_{r_{1}}(v)}\dd{x}_{1}\dots\int_{\partial B_{r_{n-1}}(v)}\dd{x}_{n-1}
e^{-U(\bx)}\prod_{i=1}^{n-1}\bl_{s_{i}}(x_{i}),
\end{multlined}
\end{equation}
which completes the proof.
\end{proof}

Using Claim \ref{clm:one-point-densities_derivative}, we differentiate $Z(\bl_{s})$ term by term to get that
\begin{equation}
\begin{split}
\dv{s}Z(\bl_{s}){}
&=\sum_{n=1}^{\infty}\frac{1}{n!}\cdot n\int_{\partial B_{s}(v)}\dd{w}\bl(w)(e^{-\phi(v-w)}-1)\int_{(\R^{d})^{n-1}}\dd{\vb{x}} e^{-U(\bx,w)}\prod_{i=1}^{n-1}\bl_{s}(x_{i}) \\
{}&=\int_{\partial B_{s}(v)}\dd{w}\bl(w)(e^{-\phi(v-w)}-1)\sum_{n=0}^{\infty}\frac{1}{n!}\int_{(\R^{d})^{n}}\dd{\vb{x}} e^{-U(\bx)}\prod_{i=1}^{n}\left[\bl_{s}(x_{i})e^{-\phi(w-x_{i})}\right] \\
{}&=\int_{\partial B_{s}(v)}\dd{w}\bl(w)(e^{-\phi(v-w)}-1)Z(\bl_{s}e^{-\phi(w-\cdot)}).
\end{split}
\end{equation}
Thus,
\begin{equation}
\label{eqn:complex-contraction_recursion_preliminary-log-Z-difference}
\begin{multlined}
\log Z(\bl e^{-\phi(v-\cdot)})-\log Z(\bl)
=\int_{0}^{\infty}\dd{s}\frac{\dv{s}Z(\bl_{s})}{Z(\bl_{s})} 
\\
=\int_{0}^{\infty}\dd{s}\int_{\partial B_{s}(v)}\dd{w}\bl(w)(e^{-\phi(v-w)}-1)\frac{Z(\bl_{s}e^{-\phi(w-\cdot)})}{Z(\bl_{s})}.
\end{multlined}
\end{equation}

Suppose first that $s<\hcconst$, and let $w\in\partial B_{s}(v)$.
In this case, $\indicator{B_{d(v,w)}(v)^{c}}(w)=1$, and 
\begin{equation}
\bl_{s}(\cdot)=\bl_{d(v,w)}(\cdot)=\indicator{B_{d(v,w)}(v)^{c}}(\cdot)\bl(\cdot),
\end{equation}
so, after recalling \eqref{eqn:one-point-densities_modulation},
\begin{equation}
\label{eqn:one-point-densities_density-interpolation-integrand}
\bl(w)\frac{Z(\bl_{s}e^{-\phi(w-\cdot)})}{Z(\bl_{s})}
=\bl_{v\rightarrow w}(w)\frac{Z(\bl_{v\rightarrow w}e^{-\phi(w-\cdot)})}{Z(\bl_{v\rightarrow w})}
=\rho_{\bl_{v\rightarrow w}}(w).
\end{equation}

Suppose next that $s\ge\hcconst$, and let $w\in\partial B_{s}(v)$.
In this case, we have that $\indicator{B_{\hcconst}(v)^{c}}(w)=1$, $1+(e^{-\phi(v-w)}-1)\indicator{B_{d(v,w)}(v)}(w)=1$, and 
\begin{equation}
\bl_{s}(\cdot)=\bl_{d(v,w)}(\cdot)=\indicator{B_{\hcconst}(v)^{c}}(\cdot)\left[1+(e^{-\phi(v-\cdot)}-1)\indicator{B_{d(v,w)}(v)}(\cdot)\right]\bl(\cdot),
\end{equation}
so the identity \eqref{eqn:one-point-densities_density-interpolation-integrand} holds true again, after recalling \eqref{eqn:one-point-densities_modulation}.

Therefore,
\begin{equation}
\label{eqn:complex-contraction_recursion_log-Z-difference}
\log Z(\bl e^{-\phi(v-\cdot)})-\log Z(\bl)
=\int_{\R^{d}}\dd{w}\rho_{\bl_{v\rightarrow w}}(w)(e^{-\phi(v-w)}-1).
\end{equation}
The theorem follows after exponentiating \eqref{eqn:complex-contraction_recursion_log-Z-difference} and recalling \eqref{eqn:one-point-densities_def}.
\end{proof}

\section{Tree recursions}
\label{sec:tree-recursions}

In this section, we develop the framework for studying iterations of the recursive identity in Theorem \ref{thm:recursion}.
The recursive identity \eqref{eqn:one-point-densities_recursion} tells us that to compute the one-point density $\rho_{\bl}(v_{0})$ of any given activity function $\bl$, it suffices to compute the one-point density $\rho_{\bl_{v_{0}\rightarrow v_{1}}}(v_{1})$ for each $v_{1}\in\R^{d}$, to which end it suffices to compute the one-point density $\rho_{(\bl_{v_{0}\rightarrow v_{1}})_{v_{1}\rightarrow v_{2}}}(v_{2})$ for each $v_{2}\in\R^{d}$, and so forth.
Our goal is to elucidate the relationship between the initial one-point density $\rho_{\bl}(v_{0})$ and the ones that arise after many iterations of the recursive identity.
We will accomplish this using the formalism of tree recursions introduced by Michelen and Perkins in \cite[\S4]{michelen2023analyticity}.

Throughout the section, we fix an activity function $\bl:\Lambda\rightarrow\C$.
In anticipation of the recursive argument in \S\ref{sec:zero-freeness}, we will work with a general family of activity functions which we call \emph{modulations} of $\bl$.

\begin{definition}[modulation]
\label{def:tree-recursions_modulation}
Let $\alpha\in[0,1]$, $n\ge 0$ an integer, $\set{x_{1},\dots,x_{n}}\subset\R^{d}$ such that $U(x_{1},\dots,x_{n})<\infty$, $t_{1},\dots,t_{n}\in[\hcconst,\infty]$, and $A\subseteq\cap_{i=1}^{n}B_{\hcconst}(x_{i})^{c}$ measurable.
Corresponding to these data is a \emph{modulation} of $\bl$, defined as the activity function
\begin{equation}
\label{eqn:tree-recursions_modulation}
\alpha\indicator{A}(\cdot)\prod_{i=1}^{n}\left[1+(e^{-\phi(x_{i}-\cdot)}-1)\indicator{B_{t_{i}}(x_{i})}(\cdot)\right]\bl(\cdot).
\end{equation}
In the sequel, we will denote by $\bl^{m}$ a generic modulation of $\bl$ and always assume that it is in the form of \eqref{eqn:tree-recursions_modulation}.
\end{definition}

Below, we will study the general problem of characterizing the relationship between a modulation $\bl^{m}$ of $\bl$ and the activity functions that are generated after passing the one-point density $\rho_{\bl^{m}}(v_{0})$, for any $v_{0}\in\R^{d}$, through multiple iterations of the recursive identity \eqref{eqn:one-point-densities_recursion}.
Notice that the activity function $\bl$ is a modulation of itself with $\alpha=1$, $n=0$, and $A=\R^{d}$, so this general program indeed encompasses our original goal.

First, we recall that the validity of the recursive identity depends on certain interpolating activity functions to have nonzero partition functions (see Theorem \ref{thm:recursion}).
As promised, we now introduce a notion of zero-freeness for an activity function that covers all such interpolations.

\begin{definition}[total zero-freeness]
\label{def:tree-recursions_totally-zero-free}
We say that an activity function $\bl:\Lambda\rightarrow\C$ is \emph{totally zero-free} if $Z(\bl^{m})\ne 0$ for all modulations $\bl^{m}$ of $\bl$.
\end{definition}

Next, we import some basic notation and terminology from \cite[\S4.1]{michelen2023analyticity}.

\begin{definition}[tree recursion]
Let $k\ge 1$.
By a \emph{depth-$k$ modulating function} (originally called a \emph{depth-$k$ damping function} in \cite{michelen2023analyticity}), we mean a bounded, measurable function $\vec{\gamma}:\cup_{j=1}^{k}(\R^{d})^{j+1}\rightarrow\R_{\ge 0}$.
By a \emph{depth-$k$ boundary condition}, we mean a bounded, measurable function $\vec{\tau}:(\R^{d})^{k+1}\rightarrow\C$.
Given a modulation $\bl^{m}$ of $\bl$, a depth-$k$ boundary condition $\vec{\tau}$, and a depth-$k$ modulating function $\vec{\gamma}$, we define the corresponding \emph{depth-$k$ tree recursion} as the function $\pi_{\bl,\vec{\tau},\vec{\gamma}}:\cup_{j=1}^{k}(\R^{d})^{j+1}\rightarrow\C$ given by
\begin{equation}
\label{eqn:tree-recursions_basics_tree-recursion}
\begin{multlined}
\pi_{\bl^{m},\vec{\tau},\vec{\gamma}}(v_{0},\dots,v_{j-1}):=
\bl^{m}(v_{j-1})\vec{\gamma}(v_{0},\dots,v_{j-1},v_{j-1})
\\
\times\exp{-\int_{\R^{d}}\dd{w}\left[1-e^{-\phi(v_{j-1}-w)}\right]\pi_{\bl^{m},\vec{\tau},\vec{\gamma}}(v_{0},\dots,v_{j-1},w)}
\end{multlined}
\end{equation}
for $1\le j\le k$, and
\begin{equation}
\pi_{\bl^{m},\vec{\tau},\vec{\gamma}}(v_{0},\dots,v_{k}):=\vec{\tau}(v_{0},\dots,v_{k}).
\end{equation}
\end{definition}

\begin{remark}
In this context, the term \emph{boundary condition} refers to a value that the recursion returns after the maximum depth is reached, rather than some particle configuration in the physical sense of the term.
\end{remark}

\begin{remark}
In the original work of Michelen and Perkins \cite{michelen2023analyticity}, the damping function $\vec{\gamma}$ provides a multiplicative factor of $\vec{\gamma}(v_{0},\dots,v_{j-1})$ in an intermediate step \eqref{eqn:tree-recursions_basics_tree-recursion} in the tree recursion rather than our factor of $\vec{\gamma}(v_{0},\dots,v_{j-1},v_{j-1})$.
Ours is the result of an intentional change of notation that, we think, better clarifies each entry's role in the modulating function: the last entry is the one true argument of the function whereas all the preceding ones only serve to determine its form.
\end{remark}

Let $k\ge 1$ and $v_{0}\in\R^{d}$.
In the remainder of the section, we show that we can always design a depth-$k$ tree recursion $\pi_{\bl^{m},\vec{\tau},\vec{\gamma}}(v_{0})$ that computes the one-point density at $v_{0}$ at activity $\bl^{m}$, i.e., $\rho_{\bl^{m}}(v_{0})=\pi_{\bl^{m},\vec{\tau},\vec{\gamma}}(v_{0})$, by choosing appropriately the (depth-$k$) boundary condition $\vec{\tau}$ and modulating function $\vec{\gamma}$.

In light of the form of the activity function \eqref{eqn:one-point-densities_modulation} generated by one iteration of the recursive identity, we start by classifying sequences of points in $\R^{d}$ of each finite length $j\ge 1$ as follows.
For each $I\subseteq\set{0,\dots,j-2}$, let
\begin{equation}
\label{eqn:tree-recursions_good-sequences}
\begin{multlined}
G_{I}^{(j)}:=\left\{(v_{0},\dots,v_{j-1})\in(\R^{d})^{j}\mid d(v_{i},v_{i+1})\ge\hcconst\text{ if and only if }i\in I,\right.
\\
\left.\text{and }d(v_{i},v_{i'})\ge\hcconst\text{ for all }i\in I\text{ and }i<i'\le j-1\vphantom{(\R^{d})^{j}}\right\},
\end{multlined}
\end{equation}
and set
\begin{align}
G^{(j)}{}&:=\bigcup_{I\subseteq\set{0,\dots,j-2}}G^{(j)}_{I},\\
\label{eqn:tree-recursions_bad-sequences}
B^{(j)}{}&:=(\R^{d})^{j}\setminus G^{(j)}.
\end{align}
Notice that the second part of the defining condition of $G^{(j)}_{I}$ in \eqref{eqn:tree-recursions_good-sequences} implies that $U(v_{i}\mid i\in I)<\infty$.

\begin{definition}[reference modulating function]
\label{def:tree-recursions_canonical-modulating-function}
Using the above classification scheme, we define the \emph{reference modulating function} $\vec{\gamma}_{c}:\cup_{j=1}^{\infty}(\R^{d})^{j+1}\rightarrow\R_{\ge 0}$ by
\begin{equation}
\label{eqn:tree-recursions_canonical-modulating-function-good-sequence}
\begin{multlined}
\vec{\gamma}_{c}(v_{0},\dots,v_{j-1},w):=
\prod_{\substack{i=0\\ i\in I}}^{j-2}\left\{\indicator{B_{\hcconst}(v_{i})^{c}}(w)\left[(e^{-\phi(w-v_{i})}-1)\indicator{B_{d(v_{i},v_{i+1})}(v_{i})}(w)+1\right]\right\}
\\
\times\prod_{\substack{i=0\\ i\not\in I}}^{j-2}\indicator{B_{d(v_{i},v_{i+1})}(v_{i})^{c}}(w)
\text{,\quad if }(v_{0},\dots,v_{j-1})\in G_{I}^{(j)},
\end{multlined}
\end{equation}
and 
\begin{equation}
\label{eqn:tree-recursions_canonical-modulating-function-bad-sequence}
\vec{\gamma}_{c}(v_{0},\dots,v_{j-1},w):=0\text{,\quad if } (v_{0},\dots,v_{j-1})\in B^{(j)},
\end{equation}
for $j\ge 1$.
\end{definition}

We are now ready to specify the functions $\vec{\tau}$ and $\vec{\gamma}$ suitable for describing iterations of the recursive identity \eqref{eqn:one-point-densities_recursion}.

\begin{theorem}
\label{thm:tree-recursions_density-correspondence}
Let $k\ge 1$, $v_{0}\in\R^{d}$, and $\bl:\Lambda\rightarrow\C$ a totally zero-free activity function.
Let $\bl^{m}$ be a modulation of $\bl$.
Define the depth-$k$ modulating function 
\begin{equation}
\label{eqn:tree-recursions_modulating-function-choice}
\vec{\gamma}(v_{0},\dots,v_{j-1},w):=\indicator{\set{v_{0},\dots,v_{j-1}}\subseteq A}\vec{\gamma}_{c}(v_{0},\dots,v_{j-1},w),
\end{equation}
where the set $A\subseteq\R^{d}$ is supplied by $\bl^{m}$ as in \eqref{eqn:tree-recursions_modulation}, and the depth-$k$ boundary condition 
\begin{equation}
\label{eqn:tree-recursions_boundary-condition-choice}
\vec{\tau}(v_{0},\dots,v_{k}):=
\begin{cases}
\rho_{\bl^{m}_{v_{0}\rightarrow\dots\rightarrow v_{k}}}(v_{k}) & \text{if } (v_{0},\dots,v_{k})\in G^{(k+1)} \\
0 & \text{else}
\end{cases},
\end{equation}
where we introduce the shorthand
\begin{equation}
\label{eqn:tree-recursions_shorthand}
\bl^{m}_{v_{0}\rightarrow\dots\rightarrow v_{k}}(\cdot)
:=\vec{\gamma}(v_{0},\dots,v_{k},\cdot)\bl(\cdot).
\end{equation}
Then, the corresponding depth-$k$ tree recursion $\pi_{\bl^{m},\vec{\tau},\vec{\gamma}}$ satisfies
\begin{equation}
\label{eqn:tree-recursions_stronger-recursion-density-relation}
\pi_{\bl^{m},\vec{\tau},\vec{\gamma}}(v_{0},\dots,v_{j-1})=
\begin{cases}
\rho_{\bl^{m}_{v_{0}\rightarrow\dots\rightarrow v_{j-1}}}(v_{j-1}) & \text{if }(v_{0},\dots,v_{j-1})\in G^{(j)} \\
0 & \text{if }(v_{0},\dots,v_{j-1})\in B^{(j)}
\end{cases}
\end{equation}
for all $1\le j\le k+1$.
In particular, the tree recursion $\pi_{\bl^{m},\vec{\tau},\vec{\gamma}}$ computes the one-point density at $v_{0}$ at activity $\bl^{m}$ in the sense that $\pi_{\bl^{m},\vec{\tau},\vec{\gamma}}(v_{0})=\rho_{\bl^{m}}(v_{0})$.
\end{theorem}

\begin{proof}
To prove \eqref{eqn:tree-recursions_stronger-recursion-density-relation}, we use induction on $k+1-j$.
When $j=k+1$, the tree recursion returns the boundary condition by definition, i.e., $\pi_{\bl^{m},\vec{\tau},\vec{\gamma}}(v_{0},\dots,v_{k})=\vec{\tau}(v_{0},\dots,v_{k})$.
By \eqref{eqn:tree-recursions_boundary-condition-choice}, there is nothing to prove.

Suppose now that $j\le k$.
We carry out one step of the tree recursion as in \eqref{eqn:tree-recursions_basics_tree-recursion}:
\begin{equation}
\label{eqn:tree-recursions_density-correspondence_one-step-tree-recursion}
\begin{multlined}
\pi_{\bl^{m},\vec{\tau},\vec{\gamma}}(v_{0},\dots,v_{j-1})=
\bl^{m}(v_{j-1})\vec{\gamma}(v_{0},\dots,v_{j-1},v_{j-1})
\\
\times\exp{-\int_{\R^{d}}\dd{w}\left[1-e^{-\phi(v_{j-1}-w)}\right]\pi_{\bl^{m},\vec{\tau},\vec{\gamma}}(v_{0},\dots,v_{j-1},w)}.
\end{multlined}
\end{equation}
If $(v_{0},\dots,v_{j-1})\in B^{(j)}$, then by the construction of the modulating function $\vec{\gamma}$ (see \eqref{eqn:tree-recursions_modulating-function-choice} and \eqref{eqn:tree-recursions_canonical-modulating-function-bad-sequence}), 
$\vec{\gamma}(v_{0},\dots,v_{j-1},v_{j-1})=0$, so $\pi_{\bl^{m},\vec{\tau},\vec{\gamma}}(v_{0},\dots,v_{j-1})=0$ as required.
Thus, we may assume that $(v_{0},\dots,v_{j-1})\in G^{(j)}_{I}$, where $I\subseteq\set{0,\dots,j-2}$.
The idea now is to compare the tree recursion identity \eqref{eqn:tree-recursions_density-correspondence_one-step-tree-recursion} to the result of applying the recursive identity \eqref{eqn:one-point-densities_recursion} to $\rho_{\bl^{m}_{v_{0}\rightarrow\dots\rightarrow v_{j-1}}}(v_{j-1})$.
We will accomplish this in several steps.

Firstly, we apply the inductive hypothesis to $\pi_{\bl^{m},\vec{\tau},\vec{\gamma}}(v_{0},\dots,v_{j-1},w)$.

\begin{claim}
\label{clm:tree-recursions_density-correspondence_bad-sequence-condition}
Suppose that $(v_{0},\dots,v_{j-1})\in G^{(j)}$.
For any $w\in\R^{d}$, $(v_{0},\dots,v_{j-1},w)\in B^{(j+1)}$ if and only if $w\in\cup_{i\in I}B_{\hcconst}(v_{i})$.
\end{claim}

\begin{proof}
This is clear from the definitions \eqref{eqn:tree-recursions_good-sequences} and \eqref{eqn:tree-recursions_bad-sequences}.
\end{proof}

By Claim \ref{clm:tree-recursions_density-correspondence_bad-sequence-condition} and the inductive hypothesis, we get from \eqref{eqn:tree-recursions_density-correspondence_one-step-tree-recursion} that 
\begin{equation}
\label{eqn:tree-recursions_density-correspondence_first-rewriting}
\begin{multlined}
\pi_{\bl^{m},\vec{\tau},\vec{\gamma}}(v_{0},\dots,v_{j-1})=
\bl^{m}(v_{j-1})\vec{\gamma}(v_{0},\dots,v_{j-1},v_{j-1})
\\
\times\exp{-\int_{\R^{d}\setminus\cup_{i\in I}B_{\hcconst}(v_{i})}\dd{w}\left[1-e^{-\phi(v_{j-1}-w)}\right]\rho_{\bl^{m}_{v_{0}\rightarrow\dots\rightarrow v_{j-1}\rightarrow w}(w)}}.
\end{multlined}
\end{equation}

Secondly, we use the recursive identity \eqref{eqn:one-point-densities_recursion} to compute the one-point density $\rho_{\bl^{m}_{v_{0}\rightarrow\dots\rightarrow v_{j-1}}}(v_{j-1})$.
To do this, we must check that the assumption of Theorem \ref{thm:recursion} is met.

\begin{claim}
\label{clm:tree-recursions_density-correspondence_interpolation}
For all $s\ge 0$, the activity function 
\begin{equation}
\label{eqn:tree-recursions_density-correspondence_interpolation}
\bl_{s}(\cdot)=\left[1+(e^{-\phi(v_{j-1}-\cdot)}-1)\indicator{B_{s}(v_{j-1})}(\cdot)\right]\bl^{m}_{v_{0}\rightarrow\dots\rightarrow v_{j-1}}(\cdot)
\end{equation}
is a modulation of $\bl$.
Consequently, $Z(\bl_{s})\ne 0$ by the total zero-freeness of $\bl$.
\end{claim}

\begin{proof}
Let $s\ge 0$.
By the radial symmetry of the potential $\phi$, we rewrite 
\begin{equation}
\begin{multlined}
1+(e^{-\phi(v_{j-1}-\cdot)}-1)\indicator{B_{s}(v_{j-1})}(\cdot)
\\
=\begin{cases}
\indicator{B_{s}(v_{j-1})^{c}}(\cdot) & s<\hcconst \\
\indicator{B_{\hcconst}(v_{j-1})^{c}}(\cdot)\left[1+(e^{-\phi(v_{j-1}-\cdot)}-1)\indicator{B_{s}(v_{j-1})}(\cdot)\right] & s\ge\hcconst
\end{cases}.
\end{multlined}
\end{equation}
On the other hand, using \eqref{eqn:tree-recursions_shorthand} and \eqref{eqn:tree-recursions_modulation}, we have that
\begin{equation}
\label{eqn:tree-recursions_density-correspondence_interpolation_full-expansion}
\begin{multlined}
\bl^{m}_{v_{0}\rightarrow\dots\rightarrow v_{j-1}}(\cdot)
=\alpha
\indicator{\set{v_{0},\dots,v_{j-1}}\subseteq A}
\indicator{A'}(\cdot)
\prod_{\substack{i=0\\ i\in I}}^{j-2} 
\left[(e^{-\phi(v_{i}-\cdot)}-1)\indicator{B_{d(v_{i},v_{i+1})}(v_{i})}(\cdot)+1\right] 
\\
\times
\prod_{i=1}^{n}\left[1+(e^{-\phi(x_{i}-\cdot)}-1)\indicator{B_{t_{i}}(x_{i})}(\cdot)\right]\bl(\cdot),
\end{multlined}
\end{equation}
where 
\begin{equation}
A':=A\cap\bigcap_{\substack{i=0\\ i\in I}}^{j-2}B_{\hcconst}(v_{i})^{c}\cap\bigcap_{\substack{i=0\\ i\not\in I}}^{j-2}B_{d(v_{i},v_{i+1})}(v_{i})^{c}.
\end{equation}
If $\set{v_{0},\dots,v_{j-1}}\not\subseteq A$, then $\bl^{m}_{v_{0}\rightarrow\dots\rightarrow v_{j-1}}$ is the zero activity function, which is trivially a modulation of $\bl$ (with $\alpha=0$).
Else, we have that 
\begin{equation}
\label{eqn:tree-recursions_density-correspondence_interpolation_finite-potential-energy}
U(x_{1},\dots,x_{n},v_{i}\mid i\in I)<\infty
\end{equation}
since $A\subseteq\cap_{i=1}^{n}B_{\hcconst}(x_{i})^{c}$ and $(v_{0},\dots,v_{j-1})\in G^{(j)}_{I}$.
By Definition \ref{def:tree-recursions_modulation}, $\bl^{m}_{v_{0}\rightarrow\dots\rightarrow v_{j-1}}(\cdot)$ is again a modulation of $\bl$.
\end{proof}

By Theorem \ref{thm:recursion}, we have that
\begin{equation}
\label{eqn:tree-recursions_density-correspondence_density-recurion}
\begin{multlined}
\rho_{\bl^{m}_{v_{0}\rightarrow\dots\rightarrow v_{j-1}}}(v_{j-1})
=\bl^{m}_{v_{0}\rightarrow\dots\rightarrow v_{j-1}}(v_{j-1})
\\
\times\exp{-\int_{\R^{d}}\dd{w}\left[1-e^{-\phi(v_{j-1}-w)}\right]\rho_{(\bl^{m}_{v_{0}\rightarrow\dots\rightarrow v_{j-1}})_{v_{j-1}\rightarrow w}}(w)}.
\end{multlined}
\end{equation}
Seeing that $\bl^{m}_{v_{0}\rightarrow\dots\rightarrow v_{j-1}}(v_{j-1})=\bl^{m}(v_{j-1})\vec{\gamma}(v_{0},\dots,v_{j-1},v_{j-1})$, it remains to rewrite the density in the integral in the same form as in \eqref{eqn:tree-recursions_density-correspondence_first-rewriting}.

\begin{claim}
Suppose that $(v_{0},\dots,v_{j-1})\in G^{(j)}_{I}$, where $I\subseteq\set{0,\dots,j-2}$.
For any $w\in\R^{d}$,
\begin{equation}
\label{eqn:tree-recursions_density-correspondence_density-rewriting}
\rho_{(\bl^{m}_{v_{0}\rightarrow\dots\rightarrow v_{j-1}})_{v_{j-1}\rightarrow w}}(w)
=\begin{cases}
\rho_{\bl^{m}_{v_{0}\rightarrow\dots\rightarrow v_{j-1}\rightarrow w}}(w) & \text{if }w\in\R^{d}\setminus\bigcup_{i\in I}B_{\hcconst}(v_{i}) \\
0 & \text{else}
\end{cases}.
\end{equation}
\end{claim}

\begin{proof}
By \eqref{eqn:tree-recursions_shorthand} and \eqref{eqn:one-point-densities_modulation}, we write
\begin{equation}
(\bl^{m}_{v_{0}\rightarrow\dots\rightarrow v_{j-1}})_{v_{j-1}\rightarrow w}(\cdot)
=\bl^{m}(\cdot)\vec{\gamma}(v_{0}\dots,v_{j-1},\cdot)\vec{\gamma}(v_{j-1},w,\cdot).
\end{equation}
If $w\in\cup_{i\in I}B_{\hcconst}(v_{i})$, then $\vec{\gamma}(v_{0}\dots,v_{j-1},w)=0$, so $\rho_{(\bl^{m}_{v_{0}\rightarrow\dots\rightarrow v_{j-1}})_{v_{j-1}\rightarrow w}}(w)=0$ by \eqref{eqn:one-point-densities_def}.
Otherwise, we have that
\begin{equation}
(v_{0},\dots,v_{j-1},w)\in
\begin{cases}
G^{(j+1)}_{I} & \text{if }d(v_{j-1},w)<\hcconst \\
G^{(j+1)}_{I\cup\set{j-1}} & \text{else}
\end{cases},
\end{equation}
so 
\begin{equation}
\bl^{m}(\cdot)\vec{\gamma}(v_{0}\dots,v_{j-1},\cdot)\vec{\gamma}(v_{j-1},w,\cdot)
=\bl^{m}(\cdot)\vec{\gamma}(v_{0},\dots,v_{j-1},w,\cdot)
=\bl^{m}_{v_{0}\rightarrow\dots\rightarrow v_{j-1}\rightarrow w}(\cdot),
\end{equation}
which completes the proof.
\end{proof}

The theorem follows after inserting \eqref{eqn:tree-recursions_density-correspondence_density-rewriting} into \eqref{eqn:tree-recursions_density-correspondence_density-recurion} and comparing the latter with \eqref{eqn:tree-recursions_density-correspondence_first-rewriting}.
\end{proof}

Finally, we note a boundedness property of the modulating function $\vec{\gamma}$ chosen in Theorem \ref{thm:tree-recursions_density-correspondence}, which will be essential in the coming analysis.

\begin{lemma}
\label{lem:one-point-densities_uniform-boundedness}
Let $\bl:\Lambda\rightarrow\C$ be an activity function (which is bounded by definition) and $\bl^{m}$ a modulation of $\bl$.
For all $k\ge 1$, $1\le j\le k$, and $v_{0},\dots,v_{j-1}\in\R^{d}$, the depth-$k$ modulating function $\vec{\gamma}$ in Theorem \ref{thm:tree-recursions_density-correspondence} satisfies 
\begin{equation}
\label{eqn:one-point-densities_uniform-boundedness}
\abs{\vec{\gamma}(v_{0},\dots,v_{j-1},\cdot)\bl^{m}(\cdot)}\le e^{\lsconst}\norm{\bl}_{\infty},
\end{equation}
where $\lsconst\ge 0$ is the local stability constant defined in \eqref{eqn:introduction_local-stability}.
\end{lemma}

\begin{proof}
If $(v_{0},\dots,v_{j-1})\in B^{(j)}$, then $\vec{\gamma}(v_{0},\dots,v_{j-1},\cdot)$ is identically zero and there is nothing to prove.
Otherwise, we fully expand $\vec{\gamma}(v_{0},\dots,v_{j-1},\cdot)\bl^{m}(\cdot)$ as in \eqref{eqn:tree-recursions_density-correspondence_interpolation_full-expansion}.
We may assume that $\set{v_{0},\dots,v_{j-1}}\subseteq A\subseteq\cap_{i=1}^{n}B_{\hcconst}(x_{i})^{c}$ (see \eqref{eqn:tree-recursions_density-correspondence_interpolation_full-expansion}).
By the same argument as in the proof of Claim \ref{clm:tree-recursions_density-correspondence_interpolation}, we deduce the finite energy property \eqref{eqn:tree-recursions_density-correspondence_interpolation_finite-potential-energy}.
Therefore, no matter which indicator functions in \eqref{eqn:tree-recursions_density-correspondence_interpolation_full-expansion} are nonzero, the bound \eqref{eqn:one-point-densities_uniform-boundedness} holds by local stability.
\end{proof}

\section{Contraction properties}
\label{sec:contraction-properties}

In this section, we analyze the contraction properties of finite-depth tree recursions.
Our central object of study is the functional 
\begin{equation}
\label{eqn:contraction-properties_functional}
F_{v}(\bl,\br):=\bl(v)\exp{-\int_{\R^{d}}\dd{w}\left[1-e^{-\phi(v-w)}\right]\br(w)}
\end{equation}
abstracted from the recursive identity \eqref{eqn:one-point-densities_recursion} which corresponds to a single step in a tree recursion.
We will start with the simple case of \emph{real-valued} activity and one-point density functions (i.e., $\bl$ and $\br$) to illustrate the main ideas of the analysis, before moving on to the general and much more technical case of all \emph{complex-valued} functions as is required for proving the analyticity of the pressure.

\subsection{Real contraction}
\label{sec:contraction-properties_real-contraction}

First, we consider the functional \eqref{eqn:contraction-properties_functional} as defined over bounded, measurable functions $\bl,\br:\R^{d}\rightarrow\R_{\ge 0}$.
Following the treatment in \cite[\S5.1]{michelen2023analyticity}, we apply a change of coordinates \cite{guo2018uniqueness,li2013correlation} to the functional $F_{v}$ through the \emph{potential function} $\psi(x):=\sqrt{x}$, namely, we will study the transformed functional
\begin{equation}
g_{v}(\bl,\bz)
:=\psi(F_{v}(\bl,\psi^{-1}(\bz)))
=\left[\bl(v)\exp{-\int_{\R^{d}}\dd{w}\left[1-e^{-\phi(v-w)}\right]\bz(w)^{2}}\right]^{1/2}.
\end{equation}
From here on, we restrict ourselves to only those $\bl$ and $\bz$ that take values in $[0,\tl]$ and $[0,\tz]$, respectively, where $\tilde{\lambda},\tilde{z}\ge0$ are constants that we will choose optimally later.

We will make extensive use of the following estimate of the difference in the outputs of the functional when applied to the same activity function $\bl$ but different $\bz$'s.

\begin{lemma}
\label{lem:contraction-properties_real-contraction_single-iteration}
Let $\tilde{z}>0$.
Denote by $M(\tilde{z})$ the maximum of the function
\begin{equation}
\label{eqn:contraction-properties_real-contraction_single-iteration_coefficient-function}
z\mapsto e^{-\pconst z^{2}+\aconst \tilde{z}^{2}}(\pconst z^{2}+\aconst \tilde{z}^{2})
\end{equation} 
on $[0,\tilde{z}]$.
For any Lebesgue measurable functions $\bx,\by:\R^{d}\rightarrow[0,\tilde{z}]$,
\begin{equation}
\abs{g_{v}(\bl,\bx)-g_{v}(\bl,\by)}^{2}
\le\bl(v)M(\tilde{z})\int_{\R^{d}}\dd{w}\abs{1-e^{-\phi(v-w)}}\abs{\bx(w)-\by(w)}^{2}.
\end{equation}
\end{lemma}

\begin{proof}
Let $\bz_{s}:=s\bx+(1-s)\by$.
By the mean value theorem, there exists $t\in(0,1)$ such that
\begin{equation}
\begin{multlined}
\abs{g_{v}(\bl,\bx)-g_{v}(\bl,\by)}
\le\abs{\eval{\dv{s}g_{\lambda}(\bl,\bz_{s})}_{s=t}}
\le\bl(v)^{1/2}
\\
\times\exp{-\frac{1}{2}\int_{\R^{d}}\dd{w}\left[1-e^{-\phi(v-w)}\right]\bz_{t}(w)^{2}}
\int_{\R^{d}}\dd{w}\abs{1-e^{-\phi(v-w)}}\bz_{t}(w)\abs{\by(w)-\bx(w)}.
\end{multlined}
\end{equation}
Using the Cauchy-Schwarz inequality with the measure $\dd{w}\abs{1-e^{-\phi(v-w)}}$ on $\R^{d}$ on the last integral, we get that
\begin{equation}
\label{eqn:contraction-properties_real-contraction_single-iteration_intermediate-bound}
\begin{multlined}
\abs{g_{v}(\bl,\bx)-g_{v}(\bl,\by)}^{2}
\le \bl(v)\exp{-\int_{\R^{d}}\dd{w}\left[1-e^{-\phi(v-w)}\right]\bz_{t}(w)^{2}}
\\
\times \int_{\R^{d}}\dd{w}\abs{1-e^{-\phi(v-w)}}\bz_{t}(w)^{2}
\cdot\int_{\R^{d}}\dd{w}\abs{1-e^{-\phi(v-w)}}\abs{\by(w)-\bx(w)}^{2}.
\end{multlined}
\end{equation}
Now, let $\mathcal{D}_{v}:=\set{w\in\R^{d}\mid\phi(v-w)\ge 0}$.
By the convexity of the image of $[0,\tilde{z}]$ under the mapping $z\mapsto z^{2}$, there exist $z_{1},z_{2}\in[0,\tilde{z}]$ such that
\begin{equation}
\label{eqn:contraction-properties_real-contraction_single-iteration_convexity}
\pconst z_{1}^{2}=\int_{\mathcal{D}_{v}}\dd{w}\left[1-e^{-\phi(v-w)}\right]\bz_{t}(w)^{2}, \quad
\aconst z_{2}^{2}=\int_{\R^{d}\setminus\mathcal{D}_{v}}\dd{w}\left[e^{-\phi(v-w)}-1\right]\bz_{t}(w)^{2}.
\end{equation}
Consequently, 
\begin{equation}
\label{eqn:contraction-properties_real-contraction_single-iteration_convexity-bound}
\begin{multlined}
\exp{-\int_{\R^{d}}\dd{w}\left[1-e^{-\phi(v-w)}\right]\bz_{t}(w)^{2}}
\int_{\R^{d}}\dd{w}\abs{1-e^{-\phi(v-w)}}\bz_{t}(w)^{2}
\\
=e^{-\pconst z_{1}^{2}+\aconst z_{2}^{2}}(\pconst z_{1}^{2}+\aconst z_{2}^{2})
\le M(\tilde{z}).
\end{multlined}
\end{equation}
The lemma follows after inserting \eqref{eqn:contraction-properties_real-contraction_single-iteration_convexity-bound} into \eqref{eqn:contraction-properties_real-contraction_single-iteration_intermediate-bound}.
\end{proof}

Next, we derive a condition for Lemma \ref{lem:contraction-properties_real-contraction_single-iteration} to be useful toward understanding the contraction properties of tree recursions which, abstractly, involve many successive iterations of the functional.

\begin{lemma}
\label{lem:contraction-properties_real-contraction_self-map-preservation}
Suppose that $(\tilde{\lambda},\tilde{z})\in\R_{\ge 0}^{2}$ satisfies the \emph{self-map condition}
\begin{equation}
\label{eqn:contraction-properties_real-contraction_self-map-condition}
\tilde{\lambda}e^{\aconst\tilde{z}^{2}}\le\tilde{z}^{2}.
\end{equation}
Let $k\ge 1$.
Suppose that $\pi_{\bl,\vec{\tau},\vec{\gamma}}$ is a depth-$k$ tree recursion such that $\vec{\tau}$ takes values in $[0,\tz^{2}]$ and $\vec{\gamma}(v_{0},\dots,v_{j-1},\cdot)\bl(\cdot)$ in $[0,\tl]$ for all $1\le j\le k$ and $v_{0},\dots,v_{j-1}\in\R^{d}$.
Then, for all $1\le j\le k+1$ and $v_{0},\dots,v_{j-1}\in\R^{d}$, $\pi_{\bl,\vec{\tau},\vec{\gamma}}(v_{0},\dots,v_{j-1})\in[0,\tz^{2}]$.
\end{lemma}

\begin{proof}
We use induction on $k+1-j$.
When $j=k+1$, $\pi_{\bl,\vec{\tau},\vec{\gamma}}(v_{0},\dots,v_{k})=\vec{\tau}(v_{0},\dots,v_{k})\in[0,\tz^{2}]$.
Suppose now that $j\le k$.
We carry out one step in the tree recursion as follows:
\begin{equation}
\label{lem:contraction-properties_real-contraction_self-map-preservation_one-step}
\begin{multlined}
\pi_{\bl,\vec{\tau},\vec{\gamma}}(v_{0},\dots,v_{j-1})=
\bl(v_{j-1})\vec{\gamma}(v_{0},\dots,v_{j-1},v_{j-1})
\\
\times\exp{-\int_{\R^{d}}\dd{w}\left[1-e^{-\phi(v_{j-1}-w)}\right]\pi_{\bl,\vec{\tau},\vec{\gamma}}(v_{0},\dots,v_{j-1},w)}.
\end{multlined}
\end{equation}
By the inductive hypothesis, $\pi_{\bl,\vec{\tau},\vec{\gamma}}(v_{0},\dots,v_{j-1},w)\in[0,\tz^{2}]$.
Hence, we can bound 
\begin{equation}
\pi_{\bl,\vec{\tau},\vec{\gamma}}(v_{0},\dots,v_{j-1})\le\tl e^{\aconst\tz^{2}}\le\tz^{2},
\end{equation}
where we used the self-map condition \eqref{eqn:contraction-properties_real-contraction_self-map-condition} in the last inequality.
\end{proof}

Under the self-map condition, we iterate the bound in Lemma \ref{lem:contraction-properties_real-contraction_single-iteration} for finite-depth tree recursions.

\begin{proposition}
\label{prop:contraction-properties_real-contraction_multiple-iteration}
Suppose that $(\tilde{\lambda},\tilde{z})\in\R_{\ge 0}^{2}$ satisfies the self-map condition \eqref{eqn:contraction-properties_real-contraction_self-map-condition}.
Let $k\ge 1$.
Suppose that $\pi_{\bl,\vec{\tau}_{1},\vec{\gamma}}$ and $\pi_{\bl,\vec{\tau}_{2},\vec{\gamma}}$ are two depth-$k$ tree recursions possibly differing in their boundary conditions, where $\bl$ and $\vec{\gamma}$ satisfy the same conditions as in Lemma \ref{lem:contraction-properties_real-contraction_self-map-preservation}, and both boundary conditions $\vec{\tau}_{i}$ take values in $[0,\tz^{2}]$.
Then, for all $1\le j\le k+1$ and $v_{0},\dots,v_{j-1}\in\R^{d}$,
\begin{equation}
\label{eqn:contraction-properties_real-contraction_multiple-iteration}
\begin{multlined}
\abs{\sqrt{\pi_{\bl,\vec{\tau}_{1},\vec{\gamma}}(v_{0},\dots,v_{j-1})}
-\sqrt{\pi_{\bl,\vec{\tau}_{2},\vec{\gamma}}(v_{0},\dots,v_{j-1})}}^{2}
\le M(\tilde{z})^{k+1-j}\int_{(\R^{d})^{k+1-j}}\dd{v}_{j}\dots\dd{v}_{k}
\\
\times \prod_{\ell=j}^{k}\left[\bl(v_{\ell-1})\vec{\gamma}(v_{0},\dots,v_{\ell-1},v_{\ell-1})\vphantom{\abs{1-e^{-\phi(v_{\ell-1}-v_{\ell})}}}\abs{1-e^{-\phi(v_{\ell-1}-v_{\ell})}}\right]
\abs{\sqrt{\vec{\tau}_{1}(v_{0},\dots,v_{k})}-\sqrt{\vec{\tau}_{2}(v_{0},\dots,v_{k})}}^{2}.
\end{multlined}
\end{equation}
\end{proposition}

\begin{proof}
We use induction on $k+1-j$.
When $j=k+1$, the bound \eqref{eqn:contraction-properties_real-contraction_multiple-iteration} asserts that 
\begin{equation}
\abs{\sqrt{\pi_{\bl,\vec{\tau}_{1},\vec{\gamma}}(v_{0},\dots,v_{k})}
-\sqrt{\pi_{\bl,\vec{\tau}_{2},\vec{\gamma}}(v_{0},\dots,v_{k})}}^{2}
\le \abs{\sqrt{\vec{\tau}_{1}(v_{0},\dots,v_{k})}-\sqrt{\vec{\tau}_{2}(v_{0},\dots,v_{k})}}^{2},
\end{equation}
which holds trivially since $\pi_{\bl,\vec{\tau}_{i},\vec{\gamma}}(v_{0},\dots,v_{k})=\vec{\tau}_{i}(v_{0},\dots,v_{k})$, $i=1,2$.
Suppose now that $j\le k$.
Carrying out one step in either tree recursion, we see that
\begin{equation}
\begin{split}
{}&\sqrt{\pi_{\bl,\vec{\tau}_{i},\vec{\gamma}}(v_{0},\dots,v_{j-1})}
\\
={}&
\begin{multlined}[t]
\left[\bl(v_{j-1})\vec{\gamma}(v_{0},\dots,v_{j-1},v_{j-1})
\vphantom{\times\exp{-\int_{\R^{d}}\dd{v}_{j}\left[1-e^{-\phi(v_{j-1}-v_{j})}\right]\pi_{\bl,\vec{\tau}_{i},\vec{\gamma}}(v_{0},\dots,v_{j-1},v_{j})}}
\right.
\\
\left.\times\exp{-\int_{\R^{d}}\dd{v}_{j}\left[1-e^{-\phi(v_{j-1}-v_{j})}\right]\pi_{\bl,\vec{\tau}_{i},\vec{\gamma}}(v_{0},\dots,v_{j-1},v_{j})}\right]^{1/2}
\end{multlined}
\\
={}&g_{v_{j-1}}(\bl(\cdot)\vec{\gamma}(v_{0},\dots,v_{j-1},\cdot),\sqrt{\pi_{\bl,\vec{\tau}_{i},\vec{\gamma}}(v_{0},\dots,v_{j-1},\cdot)}).
\end{split}
\end{equation}
By Lemma \ref{lem:contraction-properties_real-contraction_self-map-preservation}, we can use Lemma \ref{lem:contraction-properties_real-contraction_single-iteration} to deduce that
\begin{equation}
\label{eqn:contraction-properties_real-contraction_multiple-iteration_inductive-step}
\begin{multlined}
\abs{\sqrt{\pi_{\bl,\vec{\tau}_{1},\vec{\gamma}}(v_{0},\dots,v_{j-1})}
-\sqrt{\pi_{\bl,\vec{\tau}_{2},\vec{\gamma}}(v_{0},\dots,v_{j-1})}}^{2}
\le \bl(v_{j-1})
\vec{\gamma}(v_{0},\dots,v_{j-1},v_{j-1})
M(\tz)
\\
\times
\int_{\R^{d}}\dd{v}_{j}\abs{1-e^{\phi(v_{j-1}-v_{j})}}
\abs{\sqrt{\pi_{\bl,\vec{\tau}_{1},\vec{\gamma}}(v_{0},\dots,v_{j-1},v_{j})}
-\sqrt{\pi_{\bl,\vec{\tau}_{2},\vec{\gamma}}(v_{0},\dots,v_{j-1},v_{j})}}^{2}.
\end{multlined}
\end{equation}
Inserting the inductive hypothesis into \eqref{eqn:contraction-properties_real-contraction_multiple-iteration_inductive-step} completes the proof.
\end{proof}

\begin{corollary}
\label{cor:contraction-properties_real-contraction_direct-estimate}
Under the same conditions as Proposition \ref{prop:contraction-properties_real-contraction_multiple-iteration}, defining
\begin{equation}
\label{cor:contraction-properties_real-contraction_Vk}
V_{k,v_{0}}(\vec{\gamma})
:=\int_{(\R^{d})^{k}}\dd{v}_{1}\dots\dd{v}_{k}
\prod_{\ell=1}^{k}\left[
\vec{\gamma}(v_{0},\dots,v_{\ell-1},v_{\ell-1})
\abs{1-e^{-\phi(v_{\ell-1}-v_{\ell})}}
\right],
\end{equation}
we have that
\begin{equation}
\label{eqn:contraction-properties_real-contraction_direct-estimate}
\begin{multlined}
\abs{\sqrt{\pi_{\bl,\vec{\tau}_{1},\vec{\gamma}}(v_{0})}-\sqrt{\pi_{\bl,\vec{\tau}_{2},\vec{\gamma}}(v_{0})}}^{2}
\le \tilde{\lambda}^{k} M(\tilde{z})^{k} V_{k,v_{0}}(\vec{\gamma})\norm{\vec{\tau}_{1}-\vec{\tau}_{2}}_{\infty}.
\end{multlined}
\end{equation}
\end{corollary}

\begin{proof}
Setting $j=1$ in \eqref{eqn:contraction-properties_real-contraction_multiple-iteration}, we get that
\begin{equation}
\begin{multlined}
\abs{\sqrt{\pi_{\bl,\vec{\tau}_{1},\vec{\gamma}}(v_{0})}
-\sqrt{\pi_{\bl,\vec{\tau}_{2},\vec{\gamma}}(v_{0})}}^{2}
\le M(\tilde{z})^{k}\int_{(\R^{d})^{k}}\dd{v}_{1}\dots\dd{v}_{k}
\\
\times \prod_{\ell=1}^{k}\left[\bl(v_{\ell-1})\vec{\gamma}(v_{0},\dots,v_{\ell-1},v_{\ell-1})\vphantom{\abs{1-e^{-\phi(v_{\ell-1}-v_{\ell})}}}\abs{1-e^{-\phi(v_{\ell-1}-v_{\ell})}}\right]
\abs{\sqrt{\vec{\tau}_{1}(v_{0},\dots,v_{k})}-\sqrt{\vec{\tau}_{2}(v_{0},\dots,v_{k})}}^{2}.
\end{multlined}
\end{equation}
Using the elementary bound $\abs{\sqrt{x}-\sqrt{y}}^{2}\le\abs{x-y}$ which holds for all $x,y\ge 0$, we bound
\begin{equation}
\abs{\sqrt{\vec{\tau}_{1}(v_{0},\dots,v_{k})}-\sqrt{\vec{\tau}_{2}(v_{0},\dots,v_{k})}}^{2}
\le\norm{\vec{\tau}_{1}-\vec{\tau}_{2}}_{\infty}.
\end{equation}
Then, applying the assumption that $\vec{\gamma}(v_{0},\dots,v_{j-1},\cdot)\bl(\cdot)$ takes values in $[0,\tl]$ for all $1\le j\le k$ completes the proof.
\end{proof}

So far, our computation has been general and does not depend on the specifics of the tree recursion.
To connect it back to the family of tree recursions studied in \S\ref{sec:tree-recursions}, we first define a counterpart of Michelen and Perkins' potential-weighted connective constant \cite[Definition 2]{michelen2023analyticity} for locally stable hard-core models as follows.

\begin{definition}
\label{def:tree-recursions_pwcc}
Recall the reference modulating function $\vec{\gamma}_{c}$ introduced in Definition \ref{def:tree-recursions_canonical-modulating-function}.
We define a $\phi$-associated constant
\begin{equation}
\label{eqn:tree-recursions_pwcc}
\pwcc:=\inf_{k\ge 1}\set{V_{k,\vb{0}}(\vec{\gamma}_{c})^{1/k}\mid V_{k,\vb{0}}(\vec{\gamma}_{c})\ge\aconst^{k}},
\end{equation}
where we used the translation invariance of the pair potential $\phi$ to anchor $v_{0}$ at the origin $\vb{0}$ for concreteness.
\end{definition}

\begin{remark}
We will discuss the reason for imposing the lower bound of $\aconst^{k}$ in the definition of $\pwcc$ in Remark \ref{rmk:contraction-properties_real-contraction_optimization-problem}.
\end{remark}

We now apply Corollary \ref{cor:contraction-properties_real-contraction_direct-estimate} to the tree recursions constructed in Theorem \ref{thm:tree-recursions_density-correspondence}, assuming that the boundary conditions there all take values in the same interval $[0,\tz^{2}]$.
For all $k\ge 1$ and depth-$k$ tree recursions $\pi_{\bl^{m},\vec{\tau},\vec{\gamma}}$, we have that 
\begin{equation}
\label{eqn:contraction-properties_real-contraction_Vk-domination}
V_{k,v_{0}}(\vec{\gamma})\le V_{k,v_{0}}(\vec{\gamma}_{c})
\end{equation}
since $0\le\vec{\gamma}\le\vec{\gamma}_{c}$ (see \eqref{eqn:tree-recursions_modulating-function-choice}).
Hence, in view of \eqref{eqn:contraction-properties_real-contraction_direct-estimate}, a natural characterization of when the tree recursions should constitute a contraction is if $\tl M(\tz)V_{k,v_{0}}(\vec{\gamma}_{c})<1$ for some $k\ge 1$, which is in turn guaranteed by the \emph{contraction condition}:
\begin{equation}
\label{eqn:contraction-properties_real-contraction_contraction-condition}
\tl M(\tz)\pwcc<1.
\end{equation}

We observe the following simple upper bound on $\pwcc$.

\begin{lemma}
\label{lem:contraction-properties_real-contraction_pwcc-upper-bound}
$\pwcc\le C_{\phi}$.
\end{lemma}

\begin{proof}
Observe that $\vec{\gamma}_{c}(\vb{0},\cdot)\equiv 1$ as $G^{(1)}=\R^{d}$ and both products in \eqref{eqn:tree-recursions_canonical-modulating-function-good-sequence} are empty, so 
\begin{equation}
V_{1,\vb{0}}(\vec{\gamma}_{c})=\int_{\R^{d}}\dd{v}_{1}\vec{\gamma}_{c}(\vb{0},\vb{0})\abs{1-e^{-\phi(\vb{0}-v_{1})}}=C_{\phi}\ge\aconst.
\end{equation}
Thus, $\pwcc\le V_{1,\vb{0}}(\vec{\gamma}_{c})=C_{\phi}$.
\end{proof}

\subsection{An optimization problem}
\label{sec:contraction-properties_optimization-problem}

Here, we study an optimization problem for $\tl$ with the self-map condition \eqref{eqn:contraction-properties_real-contraction_self-map-condition} and the contraction condition \eqref{eqn:contraction-properties_real-contraction_contraction-condition} as its constraints.
The main result of this subsection is as follows.

\begin{proposition}
\label{prop:contraction-properties_optimization-problem}
Consider the region in $\R^{2}_{\ge 0}$ consisting of the points $(\tilde{\lambda},\tilde{z})$ satisfying the following inequalities:
\begin{equation}
\label{eqn:contraction-properties_optimization-problem}
\begin{cases}
\tl e^{\aconst\tz^{2}}\le\tz^{2} \\
\tl M(\tz)\pwcc\le 1
\end{cases}.
\end{equation}
The points in this region with the largest $\tilde{\lambda}$-value are given by
\begin{equation}
\label{eqn:contraction-properties_optimization-problem_repulsive-optimizer}
\begin{cases}
\tilde{\lambda}=e/\pwcc
\\
\tilde{z}\ge \sqrt{e/\pwcc}
\end{cases}
\end{equation}
if $\aconst=0$ (i.e., the pair potential is purely repulsive), or by
\begin{equation}
\label{eqn:contraction-properties_optimization-problem_locally-stable-optimizer}
\begin{cases}
\tilde{\lambda}=e^{1-2W(e\aconst/\pwcc)}/\pwcc
\\
\tilde{z}=\sqrt{W(e\aconst/\pwcc)/\aconst}
\end{cases}
\end{equation}
if $\aconst>0$, where $W(\cdot)$ is the Lambert $W$-function.
\end{proposition}

Before proving Proposition \ref{prop:contraction-properties_optimization-problem}, we need an explicit expression for $M(\tz)$.

\begin{lemma}
\label{eqn:contraction-properties_optimization-problem_M}
Recall the definition of $M(\tz)$ in Lemma \ref{lem:contraction-properties_real-contraction_single-iteration}.
Explicitly,
\begin{equation}
\label{eqn:contraction-properties_optimization-problem_M_repulsive}
M(\tilde{z})=
\begin{cases}
e^{-\pconst\tilde{z}^{2}}\pconst\tilde{z}^{2} & \text{if }\pconst\tilde{z}^{2}\le 1
\\
e^{-1} & \text{else}
\end{cases}
\end{equation}
if $\aconst=0$, or 
\begin{equation}
\label{eqn:contraction-properties_optimization-problem_M_locally-stable}
M(\tilde{z})=
\begin{cases}
C_{\phi}\tz^{2}e^{-(\pconst-\aconst)\tz^{2}} & \text{if }C_{\phi}\tilde{z}^{2}\le 1
\\
e^{2\aconst\tz^{2}-1} & \text{else}
\end{cases}.
\end{equation}
if $\aconst>0$.
\end{lemma}

\begin{proof}
Denote the function defined in \eqref{eqn:contraction-properties_real-contraction_single-iteration_coefficient-function} by $f(z)$.
If $\aconst=0$, then $f(z)=e^{-\pconst z^{2}}\pconst z^{2}$, and \eqref{eqn:contraction-properties_optimization-problem_M_repulsive} is clear.
Otherwise, setting $f'(z)=0$ yields that the only critical points of $f(z)$ on $\R_{\ge 0}$ are $z=0$ and the positive solution $z=z_{c}$ to $\pconst z^{2}+\aconst\tilde{z}^{2}=1$, the latter of which is in the interval $[0,\tilde{z}]$ if and only if $C_{\phi}\tilde{z}^{2}\ge 1$.
Notice that if $C_{\phi}\tilde{z}^{2}\ge 1$, then
\begin{equation}
\frac{f(z_{c})}{f(0)}=\frac{e^{\aconst\tz^{2}-1}}{\aconst\tz^{2}}\ge 1
\text{ and }
\frac{f(z_{c})}{f(\tz)}=\frac{e^{C_{\phi}\tz^{2}-1}}{C_{\phi}\tz^{2}}\ge 1,
\end{equation}
so $M(\tilde{z})=f(z_{c})$; otherwise, since
\begin{equation}
\frac{f(0)}{f(\tz)}=\frac{\aconst}{C_{\phi}}e^{\pconst\tz^{2}}\le\left(1-\frac{\pconst}{C_{\phi}}\right)e^{\pconst/C_{\phi}}\le 1,
\end{equation}
we have that $M(\tilde{z})=f(\tilde{z})$.
\end{proof}

We now prove Proposition \ref{prop:contraction-properties_optimization-problem}.

\begin{proof}[Proof of \eqref{eqn:contraction-properties_optimization-problem_repulsive-optimizer}]
Suppose $\aconst=0$.
We will compare the bounds on $\tilde{\lambda}$ given by the two inequalities in \eqref{eqn:contraction-properties_optimization-problem}.

Suppose first that $\pconst\tz^{2}\le 1$.
By Lemma \ref{lem:contraction-properties_real-contraction_pwcc-upper-bound}, $\pwcc\le C_{\phi}=\pconst$, so
\begin{equation}
M(\tz)\pwcc\tz^{2}
\le(\pconst\tz^{2})^{2}e^{-\pconst\tz^{2}}\le e^{-1},
\end{equation}
that is, the self-map condition always gives a tighter bound.
Thus, $\tilde{\lambda}\le\tz^{2}\le\pconst^{-1}$ in this case.

Suppose now that $\pconst\tz^{2}\ge 1$.
The contraction condition gives the $\tz$-independent bound of $\tl\le e/\pwcc$, which is attained if $\tz^{2}\ge e/\pwcc$, but the latter is ensured by the assumption that $\pconst\tz^{2}\ge 1$.
The proof of \eqref{eqn:contraction-properties_optimization-problem_repulsive-optimizer} is now complete.
\end{proof}

\begin{proof}[Proof of \eqref{eqn:contraction-properties_optimization-problem_locally-stable-optimizer}]
Suppose that $\aconst>0$.

Suppose first that $C_{\phi}\tz^{2}\le 1$.
Since 
\begin{equation}
M(\tz)\pwcc e^{-\aconst\tz^{2}}\tz^{2}
=C_{\phi}\pwcc\tz^{4}e^{-\pconst\tz^{2}}
\le(C_{\phi}\tz^{2})^{2}e^{-\pconst\tz^{2}} \le 1,
\end{equation}
the self-map condition always gives a tighter bound.
We conclude that, in this case, $\tl\le e^{-\aconst/C_{\phi}}/C_{\phi}$.

Suppose now that $C_{\phi}\tz^{2}\ge 1$.
Setting the upper bounds on $\tl$ from the self-map and contraction conditions equal ($\frac{1}{\pwcc}e^{1-2\aconst\tz^{2}}=\tz^{2}e^{-\aconst\tz^{2}}$), we find that $\aconst\tz^{2}=W(e\aconst/\pwcc)$.
Note that this solution satisfies our assumption that $C_{\phi}\tz^{2}\ge 1$, since
\begin{equation}
\frac{C_{\phi}}{\aconst}W(eA_{\phi}/\pwcc)\ge \frac{C_{\phi}}{\aconst}W(eA_{\phi}/C_{\phi})\ge 1.
\end{equation}
Moreover, it gives $\tl=e^{1-2W(e\aconst/\pwcc)}/\pwcc$, which is larger than the bound on $\tl$ in the previous case, since
\begin{equation}
\label{eqn:contraction-properties_real-contraction_optimization-problem_assumption-used-1}
\frac{e^{1-2W(e\aconst/\pwcc)}/\pwcc}{e^{-\aconst/C_{\phi}}/C_{\phi}}
=\frac{e^{1-2W(e\aconst/\pwcc)}\aconst/\pwcc}{e^{1-2W(e\aconst/C_{\phi})}\aconst/C_{\phi}}
\cdot \frac{e^{1-2W(e\aconst/C_{\phi})}}{e^{-\aconst/C_{\phi}}}
\ge 1.
\end{equation}

To complete the proof of \eqref{eqn:contraction-properties_optimization-problem_locally-stable-optimizer}, we need to check that no other value of $\tz$ would allow for a larger $\tl$.
Since the upper bound for $\tl$ from the contraction condition is a decreasing function of $\tz$, it suffices to study the $\tz$ such that $\aconst\tz^{2}<W(e\aconst/\pwcc)$.
Comparing the two upper bounds on $\tl$, we find that
\begin{equation}
\frac{\tz^{2}e^{-\aconst\tz^{2}}}{\frac{1}{\pwcc}e^{1-2\aconst\tz^{2}}}
=\pwcc\tz^{2}e^{\aconst\tz^{2}-1}
<\frac{\pwcc}{\aconst}W(e\aconst/\pwcc)e^{W(e\aconst/\pwcc)-1}=1,
\end{equation}
so for this range of $\tz$, the self-map condition always gives a tighter bound.
However, for 
\begin{equation}
\label{eqn:contraction-properties_real-contraction_optimization-problem_assumption-used-2}
\aconst\tz^{2}<W(e\aconst/\pwcc)\le1,
\end{equation}
the upper bound for $\tl$ from the self-map condition is an increasing function of $\tz$, so smaller values of $\tz$ would lead to smaller values of $\tl$ than before.
\end{proof}

\begin{remark}
\label{rmk:contraction-properties_real-contraction_optimization-problem}
The only places where we used the relation $\pwcc\ge\aconst$ are in \eqref{eqn:contraction-properties_real-contraction_optimization-problem_assumption-used-1} and \eqref{eqn:contraction-properties_real-contraction_optimization-problem_assumption-used-2}.
As the above computations demonstrate, without this relation, we would not be able to conclude that the choice of $\tl$ in \eqref{eqn:contraction-properties_optimization-problem_locally-stable-optimizer} is optimal, and additional analysis will be needed.
This may be the case if we had instead defined $\pwcc:=\inf_{k\ge 1} V_{k}(\vec{\gamma})^{1/k}$: it is not obvious if the relation $\inf_{k\ge 1} V_{k}(\vec{\gamma})^{1/k}\ge\aconst$ should hold, at least by a perfunctory examination of \eqref{cor:contraction-properties_real-contraction_Vk}.
\end{remark}

\subsection{Complex contraction}

We now turn to the more delicate case of complex-valued activity and one-point density functions.
Our computation in \S\ref{sec:contraction-properties_real-contraction} is not good enough for this purpose because the potential function used there ($\psi(x)=\sqrt{x}$) is not analytic at $0$.
Following the treatment in \cite[\S A.2]{michelen2023analyticity}, we introduce a tiny shift $\delta>0$ (to be chosen in \S\ref{sec:contraction-properties_complex-contraction_choices-of-constants}) to the previous potential function and study the transformation of the functional \eqref{eqn:contraction-properties_functional} under the new potential function $\psi(z)=\sqrt{\delta+z}$, which is indeed analytic at $0$:
\begin{equation}
\begin{multlined}
h_{v}(\bl,\bz)
:=\psi(F_{v}(\bl,\psi^{-1}(\bz)))
\\
=\left[\delta+\bl(v)\exp{-\int_{\R^{d}}\dd{w}\left[1-e^{-\phi(v-w)}\right][\bz(w)^{2}-\delta]}\right]^{1/2}.
\end{multlined}
\end{equation}
We will specify the domain of definition of the transformed functional $h_{v}$ in \S\ref{sec:contraction-properties_complex-contraction_choices-of-constants}.
As we will see, many properties of the old functional $g_{v}(\bl,\bz)$ as computed in \S\ref{sec:contraction-properties_real-contraction} transfer to the new one with only minor modifications, provided that $\delta>0$ is sufficiently small.

\begin{remark}
In terms of analyzing the contraction properties of the tree recursions, most of our new ideas have already been introduced in \S\ref{sec:contraction-properties_real-contraction} and \S\ref{sec:contraction-properties_optimization-problem}.
As such, this subsection does not contain much substantial innovation compared to \cite[\S A.2]{michelen2023analyticity}, apart from our attempt to make various aspects of the argument more explicit.
Hence, readers familiar with the present topic may glance through the statements of Proposition \ref{prop:contraction-properties_complex-contraction_multiple-iteration} and Corollaries \ref{cor:contraction-properties_complex-contraction_direct-estimate} and \ref{cor:contraction-properties_complex-contraction_final-result} below and skip directly to \S\ref{sec:zero-freeness}.
\end{remark}

\subsubsection{Choices of constants}
\label{sec:contraction-properties_complex-contraction_choices-of-constants}

First, we select a number of constants that will be used repeatedly in this subsection.
Let $(\tl,\tz)\in\R_{>0}^{2}$ be a solution to the optimization problem in Proposition \ref{prop:contraction-properties_optimization-problem}, where we choose any $\tz\ge\sqrt{e/\pwcc}$ if $\aconst=0$.
Fix $\lambda\in(0,\tl)$.
Let $\epsilon>0$ be such that $\lambda\le e^{-5\epsilon}\tl$.
By Definition \ref{def:tree-recursions_pwcc}, there exists $k\ge 1$ such that $V_{k,\vb{0}}(\vec{\gamma}_{c})^{1/k}\le e^{\epsilon}\pwcc$.

For any $[a,b]\subset\R$ and $r>0$, we denote the (open) complex neighborhood of $[a,b]$ of radius $r$ by
\begin{equation}
\mathcal{N}_{r}([a,b]):=\set{z\in\C\mid\text{there exists }x\in[a,b]\text{ such that }\abs{z-x}<r}.
\end{equation}
To ensure that the functional $h_{v}(\bl,\bz)$ is well-defined, we will restrict ourselves to those $\bl$ and $\bz$ that take values in $\mathcal{N}_{\delta_{1}}([0,\lambda])$ and $\mathcal{N}_{\delta_{2}}([\sqrt{\delta},\sqrt{\delta+\tz^{2}}])$, respectively, where $\delta_{1},\delta_{2}>0$ are very small constants that we choose below.

First, we choose $\delta\in(0,\frac{\epsilon}{3C_{\phi}}]$ such that
\begin{enumerate}[label=(\roman*)]
\item the maximum of the function 
\begin{equation}
(z_{1},z_{2})\mapsto(\pconst z_{1}^{2}+\aconst z_{2}^{2})e^{-\pconst z_{1}^{2}+\aconst z_{2}^{2}}
\end{equation}
on $[\sqrt{\delta},\sqrt{\delta+\tz^{2}}]^{2}$ does not exceed $e^{\epsilon/6}M(\tz)$ (compare with Lemma \ref{lem:contraction-properties_real-contraction_single-iteration}).
\label{itm:contraction-properties_complex-contraction_choices-of-constants_shifted-real-maximum}
\end{enumerate}

Next, we choose
\begin{enumerate}[label=(\roman*)]
\setcounter{enumi}{1}
\item $b_{1}>0$ and $c_{1}\in(0,\tl(e^{\epsilon/2}-1)]$ such that the auxiliary function 
\begin{equation}
h(\lambda,z_{1},z_{2}):=\sqrt{\delta+\lambda \exp{-\pconst (z_{1}^{2}-\delta)+\aconst (z_{2}^{2}-\delta)}}.
\end{equation}
is well-defined on $\mathcal{N}_{b_{1}}([0,\lambda])\times\mathcal{N}_{c_{1}}([\sqrt{\delta},\sqrt{\delta+\tz^{2}}])^{2}$;
\label{itm:contraction-properties_complex-contraction_choices-of-constants_well-definedness}
\item $c_{2}\in(0,c_{1}]$ such that the maximum of the function
\begin{equation}
(z_{1},z_{2})\mapsto(\pconst \abs{z_{1}}^{2}+\aconst \abs{z_{2}}^{2})e^{-\pconst \abs{z_{1}}^{2}+\aconst \abs{z_{2}}^{2}}
\end{equation}
on $\mathcal{N}_{c_{1}}([\sqrt{\delta},\sqrt{\delta+\tz^{2}}])^{2}$ does not exceed $e^{\epsilon/3}M(\tz)$ (as an upgrade of Property \ref{itm:contraction-properties_complex-contraction_choices-of-constants_shifted-real-maximum});
\label{itm:contraction-properties_complex-contraction_choices-of-constants_shifted-complex-maximum}
\item $b_{2}\in(0,b_{1}]$ and $c_{3}\in(0,c_{1}]$ such that
\begin{equation}
\abs{\delta+\lambda \exp{-\pconst (z_{1}^{2}-\delta)+\aconst (z_{2}^{2}-\delta)}}
\ge\abs{\lambda\exp{-\pconst(z_{1}^{2}-\delta)+\aconst(z_{2}^{2}-\delta)}}
\end{equation}
on $\mathcal{N}_{b_{2}}([0,\lambda])\times\mathcal{N}_{c_{3}}([\sqrt{\delta},\sqrt{\delta+\tz^{2}}])^{2}$;
\label{itm:contraction-properties_complex-contraction_choices-of-constants_remove-shift}
\item $c_{4}\in(0,c_{1}]$ such that 
\begin{equation}
\abs{z^{2}-\abs{z}^{2}}\le\frac{\epsilon}{3C_{\phi}}
\end{equation}
on $\mathcal{N}_{c_{4}}([\sqrt{\delta},\sqrt{\delta+\tz^{2}}])$;
\label{itm:contraction-properties_complex-contraction_choices-of-constants_square}
\item $b_{3}>0$ such that $\abs{\lambda'}\le e^{\epsilon}\lambda$ for all $\lambda'\in\mathcal{N}_{b_{3}}([0,\lambda])$.
\label{itm:contraction-properties_complex-contraction_choices-of-constants_activity-modulus}
\end{enumerate}

We will use convexity \cite{peters2019conjecture} to extend Properties \ref{itm:contraction-properties_complex-contraction_choices-of-constants_well-definedness} and \ref{itm:contraction-properties_complex-contraction_choices-of-constants_remove-shift} to the case of complex-valued functions.

\begin{lemma}
\label{lem:contraction-properties_complex-contraction_convex-image}
If $c_{5}\in(0,\sqrt{\delta}]$, then the image of $U=\mathcal{N}_{c_{5}}([\sqrt{\delta},\sqrt{\delta+\tz^{2}}])$ under the map $z\mapsto z^{2}$ is convex.
\end{lemma}

\begin{proof}
Let $c_{5}\in(0,\sqrt{\delta}]$.
Since $U$ is bounded and the function $f(z)=z^{2}$ is conformal and injective on $U$, the boundary of the image of $U$ coincides with the image of the boundary of $U$, i.e., $\partial f(U)=f(\partial U)$.
In particular, $f(\partial U)$ is a $C^{1}$ and piecewise $C^{2}$ simple closed curve in $\C$, composed of four $C^{2}$ pieces corresponding respectively to the two straight edges and two circular arcs in $\partial U$.
Hence, to prove that $f(U)$ is convex, it suffices to check that the signed curvature of $f(\partial U)$, wherever defined, has constant sign.\footnote{The author thanks Guanhua Shao for discussions about this point.}
The latter follows from an elementary computation.
\end{proof}

We set $\delta_{1}=\min\set{b_{1},b_{2},b_{3}}$, $\delta_{2}'=\min\set{c_{1},c_{2},c_{3},c_{4},\sqrt{\delta}}$, $B=\max\set{(\tl M(\tz)C_{\phi})^{1/2},1}$, and $\delta_{2}=B^{-k/2}\delta_{2}'$.
By Lemma \ref{lem:contraction-properties_complex-contraction_convex-image}, for any measurable function $\bz:\R^{d}\rightarrow\mathcal{N}_{\delta_{2}'}([\sqrt{\delta},\sqrt{\delta+\tz^{2}}])$, we can find $z_{1},z_{2}\in\mathcal{N}_{\delta_{2}'}([\sqrt{\delta},\sqrt{\delta+\tz^{2}}])$ as in \eqref{eqn:contraction-properties_real-contraction_single-iteration_convexity} (recall the region $\mathcal{D}_{v}$ defined there) such that
\begin{equation}
\pconst z_{1}^{2}=\int_{\mathcal{D}_{v}}\dd{w}\left[1-e^{-\phi(v-w)}\right]\bz(w)^{2}, \quad
\aconst z_{2}^{2}=\int_{\R^{d}\setminus\mathcal{D}_{v}}\dd{w}\left[e^{-\phi(v-w)}-1\right]\bz(w)^{2}.
\end{equation}
Hence, we have the following extensions of Properties \ref{itm:contraction-properties_complex-contraction_choices-of-constants_well-definedness} and \ref{itm:contraction-properties_complex-contraction_choices-of-constants_remove-shift}:
\begin{enumerate}[label=(\roman*')]
\setcounter{enumi}{1}
\item the functional $h_{v}(\bl,\bz)$ is well-defined for $\bl$ and $\bz$ respectively taking values in $\mathcal{N}_{\delta_{1}}([0,\lambda])$ and $\mathcal{N}_{\delta_{2}'}([\sqrt{\delta},\sqrt{\delta+\tz^{2}}])$, since
\begin{equation}
\begin{multlined}
h_{v}(\bl,\bz)=\left[\delta+\bl(v)\exp{-\int_{\R^{d}}\dd{w}\left[1-e^{-\phi(v-w)}\right][\bz(w)^{2}-\delta]}\right]^{1/2}
\\
=\sqrt{\delta+\bl(v)\exp{-\pconst(z_{1}^{2}-\delta)}+\aconst(z_{2}^{2}-\delta)}
=h(\bl(v),z_{1},z_{2});
\end{multlined}
\end{equation}
\label{itm:contraction-properties_complex-contraction_choices-of-constants_function-well-definedness}
\setcounter{enumi}{3}
\item for $\bl$ and $\bz$ respectively taking values in $\mathcal{N}_{\delta_{1}}([0,\lambda])$ and $\mathcal{N}_{\delta_{2}'}([\sqrt{\delta},\sqrt{\delta+\tz^{2}}])$, it holds that
\begin{equation}
\begin{multlined}
\abs{\delta+\bl(v)\exp{-\int_{\R^{d}}\dd{w}\left[1-e^{-\phi(v-w)}\right]\bz(w)^{2}}}
\\
\ge\abs{\bl(v)\exp{-\int_{\R^{d}}\dd{w}\left[1-e^{-\phi(v-w)}\right]\bz(w)^{2}}}.
\end{multlined}
\end{equation}
\label{itm:contraction-properties_complex-contraction_choices-of-constants_function-remove-shift}
\end{enumerate}

\subsubsection{Complex contraction}

From here on, we consider the functional $h_{v}(\bl,\bz)$ as defined over measurable functions $\bl:\R^{d}\rightarrow\mathcal{N}_{\delta_{1}}([0,\lambda])$ and $\bz:\R^{d}\rightarrow\mathcal{N}_{\delta_{2}}([\sqrt{\delta},\sqrt{\delta+\tz^{2}}])$.
As in \S\ref{sec:contraction-properties_real-contraction}, we will first analyze the functional and then apply it to tree recursions.

We have the following analogue of Lemma \ref{lem:contraction-properties_real-contraction_single-iteration}, where we estimate the difference in the values of the functional when applied to the same $\bl$ but different $\bz$'s.

\begin{lemma}
\label{lem:contraction-properties_complex-contraction_single-iteration}
For any Lebesgue measurable functions $\bl:\R^{d}\rightarrow\mathcal{N}_{\delta_{1}}([0,\lambda])$ and $\bx,\by:\R^{d}\rightarrow\mathcal{N}_{\delta_{2}'}([\sqrt{\delta},\sqrt{\delta+\tz^{2}}])$ (noting that the radius here is $\delta_{2}'$ rather than $\delta_{2}$),
\begin{equation}
\label{eqn:contraction-properties_complex-contraction_single-iteration}
\abs{h_{v}(\bl,\bx)-h_{v}(\bl,\by)}^{2}
\le e^{\epsilon}\abs{\bl(v)}M(\tilde{z})\int_{\R^{d}}\dd{w}\abs{1-e^{-\phi(v-w)}}\abs{\bx(w)-\by(w)}^{2}.
\end{equation}
\end{lemma}

\begin{proof}
Let $\bz_{s}:=s\bx+(1-s)\by$.
By the mean value theorem, there exists $t\in(0,1)$ such that
\begin{equation}
\label{eqn:contraction-properties_complex-contraction_single-iteration_derivative}
\begin{multlined}
\abs{h_{v}(\bl,\bx)-h_{v}(\bl,\by)}
\le\abs{\eval{\dv{s}h_{v}(\bl,\bz_{s})}_{s=t}}
\\
\le \frac{\abs{\bl(v)}\abs{\exp{-\int_{\R^{d}}\dd{w}\left[1-e^{-\phi(v-w)}\right]\left[\bz_{t}(w)^{2}-\delta\right]}}}{\abs{\delta+\bl(v)\exp{-\int_{\R^{d}}\dd{w}\left[1-e^{-\phi(v-w)}\right][\bz_{t}(w)^{2}-\delta]}}^{1/2}}
\\
\times
\abs{\int_{\R^{d}}\dd{w}\left[1-e^{-\phi(v-w)}\right]\bz_{t}(w)\left[\by(w)-\bx(w)\right]}.
\end{multlined}
\end{equation}
As in \eqref{eqn:contraction-properties_real-contraction_single-iteration_intermediate-bound}, we bound the last term of \eqref{eqn:contraction-properties_complex-contraction_single-iteration_derivative} using the Cauchy-Schwarz inequality with the measure $\dd{w}\abs{1-e^{-\phi(v-w)}}$ on $\R^{d}$:
\begin{equation}
\begin{multlined}
\abs{\int_{\R^{d}}\dd{w}\left[1-e^{-\phi(v-w)}\right]\bz_{t}(w)\left[\by(w)-\bx(w)\right]}
\\
\le \left(\int_{\R^{d}}\dd{w}\abs{1-e^{-\phi(v-w)}}\abs{\bz_{t}(w)}^{2}
\right)^{1/2}
\left(\int_{\R^{d}}\dd{w}\abs{1-e^{-\phi(v-w)}}\abs{\by(w)-\bx(w)}^{2}\right)^{1/2}.
\end{multlined}
\end{equation}
As for the ratio in \eqref{eqn:contraction-properties_complex-contraction_single-iteration_derivative}, we first use Property \ref{itm:contraction-properties_complex-contraction_choices-of-constants_function-remove-shift} to remove the denominator:
\begin{equation}
\begin{multlined}
\frac{\abs{\bl(v)}\abs{\exp{-\int_{\R^{d}}\dd{w}\left[1-e^{-\phi(v-w)}\right]\left[\bz_{t}(w)^{2}-\delta\right]}}}{\abs{\delta+\bl(v)\exp{-\int_{\R^{d}}\dd{w}\left[1-e^{-\phi(v-w)}\right][\bz_{t}(w)^{2}-\delta]}}^{1/2}}
\\
\le\abs{\bl(v)}^{1/2}\abs{\exp{-\int_{\R^{d}}\dd{w}\left[1-e^{-\phi(v-w)}\right]\left[\bz_{t}(w)^{2}-\delta\right]}}^{1/2}.
\end{multlined}
\end{equation}
Next, we use Property \ref{itm:contraction-properties_complex-contraction_choices-of-constants_square} to bound the modulus of the exponential:
\begin{equation}
\begin{split}
{}&\abs{e^{-\int_{\R^{d}}\dd{w}\left[1-e^{-\phi(v-w)}\right]\left[\bz_{t}(w)^{2}-\delta\right]}} \\
={}&e^{-\int_{\R^{d}}\dd{w}\left[1-e^{-\phi(v-w)}\right]\left[\abs{\bz_{t}(w)}^{2}-\delta\right]}
\abs{e^{-\int_{\R^{d}}\dd{w}\left[1-e^{-\phi(v-w)}\right]\left[\bz_{t}(w)^{2}-\abs{\bz_{t}(w)}^{2}\right]}} \\
\le{}&e^{-\int_{\R^{d}}\dd{w}\left[1-e^{-\phi(v-w)}\right]\left[\abs{\bz_{t}(w)}^{2}-\delta\right]}
e^{\int_{\R^{d}}\dd{w}\abs{1-e^{-\phi(v-w)}}\abs{\bz_{t}(w)^{2}-\abs{\bz_{t}(w)}^{2}}} \\
\le{}& e^{2\epsilon/3} e^{-\int_{\R^{d}}\dd{w}\left[1-e^{-\phi(v-w)}\right]\abs{\bz_{t}(w)}^{2}},
\end{split}
\end{equation}
where we recalled also that $\delta\le\frac{\epsilon}{3C_{\phi}}$ in the last inequality.
Combining all of the above, we get that
\begin{equation}
\begin{multlined}
\abs{h_{v}(\bl,\bx)-h_{v}(\bl,\by)}^{2}
\le e^{2\epsilon/3}\abs{\bl(v)}
e^{-\int_{\R^{d}}\dd{w}\left[1-e^{-\phi(v-w)}\right]\abs{\bz_{t}(w)}^{2}}
\\
\times
\int_{\R^{d}}\dd{w}\abs{1-e^{-\phi(v-w)}}\abs{\bz_{t}(w)}^{2}
\cdot\int_{\R^{d}}\dd{w}\abs{1-e^{-\phi(v-w)}}\abs{\by(w)-\bx(w)}^{2}.
\end{multlined}
\end{equation}
Finally, bounding
\begin{equation}
e^{-\int_{\R^{d}}\dd{w}\left[1-e^{-\phi(v-w)}\right]\abs{\bz_{t}(w)}^{2}}
\int_{\R^{d}}\dd{w}\abs{1-e^{-\phi(v-w)}}\abs{\bz_{t}(w)}^{2}
\le e^{\epsilon/3}M(\tz)
\end{equation}
using Property \ref{itm:contraction-properties_complex-contraction_choices-of-constants_shifted-complex-maximum} completes the proof.
\end{proof}

The idea now is to iterate the bound in Lemma \ref{lem:contraction-properties_complex-contraction_single-iteration} for depth-$k$ tree recursions and derive an analogue of Proposition \ref{prop:contraction-properties_real-contraction_multiple-iteration}, where $k\ge 1$ is fixed as in \S\ref{sec:contraction-properties_complex-contraction_choices-of-constants}.
However, in contrast to the real-valued case, we have here the question of whether we can actually iterate the bound at each step in the tree recursion, as we did in the proof of Proposition \ref{prop:contraction-properties_real-contraction_multiple-iteration}.
Indeed, observe that the restriction of the functional $h_{v}(\bl,\bz)$ to functions $\bl$ and $\bz$ respectively taking values in the real intervals $[0,\lambda]$ and $[\sqrt{\delta},\sqrt{\delta+\tz^{2}}]$ satisfies an analogue of the self-map condition \eqref{eqn:contraction-properties_real-contraction_self-map-condition}: 
\begin{equation}
\label{eqn:contraction-properties_complex-contraction_self-map-condition}
h_{v}(\bl,\bz)\in[\sqrt{\delta},\sqrt{\delta+\tz^{2}}].
\end{equation}
With complex-valued functions, however, it is not obvious if a similar property should hold.
In principle, each iteration can take us farther away from the real interval $[\sqrt{\delta},\sqrt{\delta+\tz^{2}}]$ and possibly even out of the domain of definition of $h_{v}$.
Thus, our first order of business is to show that this cannot happen given our choices of the constants in \S\ref{sec:contraction-properties_complex-contraction_choices-of-constants}.

\begin{lemma}
\label{lem:contraction-properties_complex-contraction_deviation-extent}
Recall the constant $B>0$ from \S\ref{sec:contraction-properties_complex-contraction_choices-of-constants}.
For all $0\le j<k$ and measurable functions $\bl:\R^{d}\rightarrow\mathcal{N}_{\delta_{1}}([0,\lambda])$ and $\bz:\R^{d}\rightarrow\mathcal{N}_{B^{j}\delta_{2}}([\sqrt{\delta},\sqrt{\delta+\tz^{2}}])$, we have that
\begin{equation}
h_{v}(\bl,\bz)\in\mathcal{N}_{B^{j+1}\delta_{2}}([\sqrt{\delta},\sqrt{\delta+\tz^{2}}]).
\end{equation}
\end{lemma}

\begin{proof}
Define $\pi_{1}(z):=\Re z$ for $z\in\C$ and 
\begin{equation}
\pi_{2}(x):=\sqrt{\delta}\indicator{(-\infty,\sqrt{\delta})}(x)+x\indicator{[\sqrt{\delta},\sqrt{\delta+\tz^{2}}]}(x)+\sqrt{\delta+\tz^{2}}\indicator{(\sqrt{\delta+\tz^{2}},\infty)}(x)
\end{equation}
for $x\in\R$.
Set $\bx:=\pi_{2}\circ\pi_{1}\circ\bz$.\footnote{The author thanks Anupam Nayak for this construction.}
Note that $\bx$ is measurable because $\pi_{1}$ and $\pi_{2}$ are both continuous and $\bz$ is measurable.
Moreover, $\bx$ takes values in $[\sqrt{\delta},\sqrt{\delta+\tz^{2}}]$, and $\norm{\bz-\bx}_{\infty}\le B^{j}\delta_{2}$.
Applying \eqref{eqn:contraction-properties_complex-contraction_single-iteration} to $\bz$ and $\bx$, we get that
\begin{equation}
\abs{h_{v}(\bl,\bz)-h_{v}(\bl,\bx)}
\le \left(e^{\epsilon}\abs{\bl(v)}M(\tilde{z})C_{\phi}\right)^{1/2}B^{j}\delta_{2}.
\end{equation}
We bound $\abs{\bl(v)}\le e^{\epsilon}\lambda\le e^{-4\epsilon}\tl$ using Property \ref{itm:contraction-properties_complex-contraction_choices-of-constants_activity-modulus}.
Recalling the self-map property \eqref{eqn:contraction-properties_complex-contraction_self-map-condition} and the definition of $B$ complete the proof.
\end{proof}

Using Lemma \ref{lem:contraction-properties_complex-contraction_deviation-extent}, we apply Lemma \ref{lem:contraction-properties_complex-contraction_single-iteration} to depth-$k$ tree recursions.

\begin{proposition}
\label{prop:contraction-properties_complex-contraction_multiple-iteration}
Suppose that $\pi_{\bl,\vec{\tau}_{1},\vec{\gamma}}$ and $\pi_{\bl,\vec{\tau}_{2},\vec{\gamma}}$ are two depth-$k$ tree recursions possibly differing in their boundary conditions, where $\vec{\gamma}(v_{0},\dots,v_{j-1},\cdot)\bl(\cdot)$ takes values in $\mathcal{N}_{\delta_{1}}([0,\lambda])$ for all $1\le j\le k$ and $v_{0},\dots,v_{j-1}\in\R^{d}$, and both boundary conditions $\vec{\tau}_{i}$ take values in $\psi^{-1}(\mathcal{N}_{\delta_{2}}([\sqrt{\delta},\sqrt{\delta+\tz^{2}}]))$.
Then, for all $1\le j\le k+1$ and $v_{0},\dots,v_{j-1}\in\R^{d}$,
\begin{equation}
\label{eqn:contraction-properties_complex-contraction_multiple-iteration}
\begin{multlined}
\abs{\psi(\pi_{\bl,\vec{\tau}_{1},\vec{\gamma}}(v_{0},\dots,v_{j-1}))
-\psi(\pi_{\bl,\vec{\tau}_{2},\vec{\gamma}}(v_{0},\dots,v_{j-1}))}^{2}
\le (e^{\epsilon}M(\tilde{z}))^{k+1-j}\int_{(\R^{d})^{k+1-j}}\dd{v}_{j}\dots\dd{v}_{k}
\\
\times \prod_{\ell=j}^{k}\left[\abs{\bl(v_{\ell-1})}\vec{\gamma}(v_{0},\dots,v_{\ell-1},v_{\ell-1})\vphantom{\abs{1-e^{-\phi(v_{\ell-1}-v_{\ell})}}}\abs{1-e^{-\phi(v_{\ell-1}-v_{\ell})}}\right]
\abs{\psi(\vec{\tau}_{1}(v_{0},\dots,v_{k}))-\psi(\vec{\tau}_{2}(v_{0},\dots,v_{k}))}^{2}.
\end{multlined}
\end{equation}
\end{proposition}

\begin{proof}
In addition to \eqref{eqn:contraction-properties_complex-contraction_multiple-iteration}, we will prove by induction on $k+1-j$ that, for all $1\le j\le k+1$ and $v_{0},\dots,v_{j-1}\in\R^{d}$,
\begin{equation}
\label{eqn:contraction-properties_complex-contraction_multiple-iteration_deviation-extent}
\psi(\pi_{\bl,\vec{\tau}_{i},\vec{\gamma}}(v_{0},\dots,v_{j-1}))\in\mathcal{N}_{B^{k+1-j}\delta_{2}}([\sqrt{\delta},\sqrt{\delta+\tz^{2}}]).
\end{equation}
When $j=k+1$, the tree recursion returns the boundary condition, and both \eqref{eqn:contraction-properties_complex-contraction_multiple-iteration} and \eqref{eqn:contraction-properties_complex-contraction_multiple-iteration_deviation-extent} are clear.
Suppose now that $j\le k$.
Carrying out one step in either tree recursion, we see that
\begin{equation}
\begin{split}
{}&\psi(\pi_{\bl,\vec{\tau}_{i},\vec{\gamma}}(v_{0},\dots,v_{j-1}))
\\
={}&
\begin{multlined}[t]
\left[\delta+\bl(v_{j-1})\vec{\gamma}(v_{0},\dots,v_{j-1},v_{j-1})
\vphantom{\times\exp{-\int_{\R^{d}}\dd{v}_{j}\left[1-e^{-\phi(v_{j-1}-v_{j})}\right]\pi_{\bl,\vec{\tau}_{i},\vec{\gamma}}(v_{0},\dots,v_{j-1},v_{j})}}
\right.
\\
\left.\times\exp{-\int_{\R^{d}}\dd{v}_{j}\left[1-e^{-\phi(v_{j-1}-v_{j})}\right]\pi_{\bl,\vec{\tau}_{i},\vec{\gamma}}(v_{0},\dots,v_{j-1},v_{j})}\right]^{1/2}
\end{multlined}
\\
={}&h_{v_{j-1}}(\bl(\cdot)\vec{\gamma}(v_{0},\dots,v_{j-1},\cdot),\psi(\pi_{\bl,\vec{\tau}_{i},\vec{\gamma}}(v_{0},\dots,v_{j-1},\cdot))).
\end{split}
\end{equation}

We prove \eqref{eqn:contraction-properties_complex-contraction_multiple-iteration_deviation-extent} first.
By the inductive hypothesis, 
\begin{equation}
\psi(\pi_{\bl,\vec{\tau}_{i},\vec{\gamma}}(v_{0},\dots,v_{j-1},\cdot))\in\mathcal{N}_{B^{k-j}\delta_{2}}([\sqrt{\delta},\sqrt{\delta+\tz^{2}}]).
\end{equation}
On the other hand, $\vec{\gamma}(v_{0},\dots,v_{j-1},\cdot)\bl(\cdot)\in\mathcal{N}_{\delta_{1}}([0,\lambda])$ by assumption.
Thus, 
\begin{equation}
\psi(\pi_{\bl,\vec{\tau}_{i},\vec{\gamma}}(v_{0},\dots,v_{j-1}))\in\mathcal{N}_{B^{k+1-j}\delta_{2}}([\sqrt{\delta},\sqrt{\delta+\tz^{2}}])
\end{equation}
by Lemma \ref{lem:contraction-properties_complex-contraction_deviation-extent}.

We now prove \eqref{eqn:contraction-properties_complex-contraction_multiple-iteration}.
By Lemma \ref{lem:contraction-properties_complex-contraction_single-iteration},
\begin{equation}
\label{eqn:contraction-properties_complex-contraction_multiple-iteration_inductive-step}
\begin{multlined}
\abs{\psi(\pi_{\bl,\vec{\tau}_{1},\vec{\gamma}}(v_{0},\dots,v_{j-1}))
-\psi(\pi_{\bl,\vec{\tau}_{2},\vec{\gamma}}(v_{0},\dots,v_{j-1}))}^{2}
\le e^{\epsilon}\abs{\bl(v_{j-1})}
\vec{\gamma}(v_{0},\dots,v_{j-1},v_{j-1})
M(\tz)
\\
\times
\int_{\R^{d}}\dd{v}_{j}\abs{1-e^{\phi(v_{j-1}-v_{j})}}
\abs{\psi(\pi_{\bl,\vec{\tau}_{1},\vec{\gamma}}(v_{0},\dots,v_{j-1},v_{j}))
-\psi(\pi_{\bl,\vec{\tau}_{2},\vec{\gamma}}(v_{0},\dots,v_{j-1},v_{j}))}^{2}.
\end{multlined}
\end{equation}
Inserting the inductive hypothesis into \eqref{eqn:contraction-properties_complex-contraction_multiple-iteration_inductive-step} completes the proof.
\end{proof}

All our results so far have been general and do not depend on the specifics of the tree recursions.
At this time, however, we specialize to the ones constructed in \S\ref{sec:tree-recursions}.

\begin{corollary}
\label{cor:contraction-properties_complex-contraction_direct-estimate}
Let $\bl$ be an activity function taking values in $\mathcal{N}_{e^{-\lsconst}\delta_{1}}([0,e^{-\lsconst}\lambda])$ and $\bl^{m}$ a modulation of $\bl$.
Let $\vec{\gamma}$ be the depth-$k$ modulating function constructed in Theorem \ref{thm:tree-recursions_density-correspondence}.
Suppose that $\vec{\tau}_{1}$ and $\vec{\tau}_{2}$ are depth-$k$ boundary conditions taking values in $\psi^{-1}(\mathcal{N}_{\delta_{2}}[\sqrt{\delta},\sqrt{\delta+\tz^{2}}])$.
Then,
\begin{equation}
\abs{\psi(\pi_{\bl^{m},\vec{\tau}_{1},\vec{\gamma}}(v_{0}))
-\psi(\pi_{\bl^{m},\vec{\tau}_{2},\vec{\gamma}}(v_{0}))}
\le e^{-k \epsilon}
\norm{\psi(\vec{\tau}_{1})-\psi(\vec{\tau}_{2})}_{\infty}.
\end{equation}
\end{corollary}

\begin{proof}
Recall from the proof of Claim \ref{clm:tree-recursions_density-correspondence_interpolation} that, for all $1\le j\le k$ and $v_{0},\dots,v_{j-1}\in\R^{d}$, $\vec{\gamma}(v_{0},\dots,v_{j-1},\cdot)\bl^{m}(\cdot)$ is a modulation of $\bl$ and thus takes values in $\mathcal{N}_{\delta_{1}}([0,\lambda])$ by local stability (observe that the prefactor in \eqref{eqn:tree-recursions_modulation} is in $[0,e^{\lsconst}]$).
Hence, applying Proposition \ref{prop:contraction-properties_complex-contraction_multiple-iteration} to $\pi_{\bl^{m},\vec{\tau}_{i},\vec{\gamma}}$ and setting $j=1$ (in the resulting version of \eqref{eqn:contraction-properties_complex-contraction_multiple-iteration}), we have that
\begin{equation}
\begin{multlined}
\abs{\psi(\pi_{\bl^{m},\vec{\tau}_{1},\vec{\gamma}}(v_{0}))
-\psi(\pi_{\bl^{m},\vec{\tau}_{2},\vec{\gamma}}(v_{0}))}^{2}
\le (e^{\epsilon}M(\tilde{z}))^{k}\int_{(\R^{d})^{k}}\dd{v}_{1}\dots\dd{v}_{k}
\\
\times \prod_{\ell=1}^{k}\left[\abs{\bl^{m}(v_{\ell-1})}\vec{\gamma}(v_{0},\dots,v_{\ell-1},v_{\ell-1})\vphantom{\abs{1-e^{-\phi(v_{\ell-1}-v_{\ell})}}}\abs{1-e^{-\phi(v_{\ell-1}-v_{\ell})}}\right]
\abs{\psi(\vec{\tau}_{1}(v_{0},\dots,v_{k}))-\psi(\vec{\tau}_{2}(v_{0},\dots,v_{k}))}^{2}.
\end{multlined}
\end{equation}
Bounding $\abs{\bl^{m}(v_{\ell-1})}\le e^{\lsconst}\abs{\bl(v_{\ell-1})}\le e^{\epsilon}\lambda$ using Property \ref{itm:contraction-properties_complex-contraction_choices-of-constants_activity-modulus}, we have that
\begin{equation}
\abs{\psi(\pi_{\bl,\vec{\tau}_{1},\vec{\gamma}}(v_{0}))
-\psi(\pi_{\bl,\vec{\tau}_{2},\vec{\gamma}}(v_{0}))}^{2}
\le (e^{2\epsilon}M(\tilde{z})\lambda)^{k}V_{k,v_{0}}(\vec{\gamma})
\norm{\psi(\vec{\tau}_{1})-\psi(\vec{\tau}_{2})}_{\infty}^{2}.
\end{equation}
Recalling that $\lambda\le e^{-5\epsilon}\tl$ and bounding $V_{k,v_{0}}(\vec{\gamma})\le V_{k,v_{0}}(\vec{\gamma}_{c})$ as in \eqref{eqn:contraction-properties_real-contraction_Vk-domination} (and using translation invariance), we get that
\begin{equation}
\abs{\psi(\pi_{\bl,\vec{\tau}_{1},\vec{\gamma}}(v_{0}))
-\psi(\pi_{\bl,\vec{\tau}_{2},\vec{\gamma}}(v_{0}))}^{2}
\le (e^{-3\epsilon}M(\tilde{z})\tl)^{k}V_{k,v_{0}}(\vec{\gamma}_{c})
\norm{\psi(\vec{\tau}_{1})-\psi(\vec{\tau}_{2})}_{\infty}^{2}.
\end{equation}
Finally, bounding $V_{k,v_{0}}(\vec{\gamma}_{c})\le (e^{\epsilon}\pwcc)^{k}$ and noting that $\tl M(\tz)\pwcc\le 1$ (see \eqref{eqn:contraction-properties_optimization-problem}) complete the proof.
\end{proof}

At last, we package Corollary \ref{cor:contraction-properties_complex-contraction_direct-estimate} in a form suitable for application in \S\ref{sec:zero-freeness} using the following uniform continuity property adapted from \cite[Lemma 41]{michelen2023analyticity}.

\begin{lemma}
\label{lem:contraction-properties_complex-contraction_uniformly-continuous-tree-recursion}
With the same constants as chosen in \S\ref{sec:contraction-properties_complex-contraction_choices-of-constants}, for all $\epsilon'>0$, there exists $\delta'\in(0,\delta_{1}]$ such that the following holds for all activity functions $\bl,\bl'$ taking values in $\mathcal{N}_{e^{-\lsconst}\delta'}([0,e^{-\lsconst}\lambda])$ with $\norm{\bl-\bl'}_{\infty}\le e^{-\lsconst}\delta'$.
For all modulations $\bl^{m}$ of $\bl$, depth-$k$ boundary conditions $\vec{\tau}$ taking values in $\psi^{-1}(\mathcal{N}_{\delta_{2}}([\sqrt{\delta},\sqrt{\delta+\tz^{2}}]))$, depth-$k$ modulating functions $\vec{\gamma}$ as in \eqref{eqn:tree-recursions_modulating-function-choice}, and $v_{0}\in\R^{d}$, 
\begin{equation}
\abs{\psi(\pi_{\bl^{m},\vec{\tau},\vec{\gamma}}(v_{0}))-\psi(\pi_{\bl^{\prime,m},\vec{\tau},\vec{\gamma}}(v_{0}))}\le\epsilon',
\end{equation}
where $\bl^{\prime,m}$ denotes the modulation of $\bl'$ corresponding to $\bl^{m}$ (see \eqref{eqn:tree-recursions_modulation}):
\begin{equation}
\bl^{\prime,m}(\cdot)=\alpha\indicator{A}(\cdot)\prod_{i=1}^{n}\left[1+(e^{-\phi(x_{i}-\cdot)}-1)\indicator{B_{t_{i}}(x_{i})}(\cdot)\right]\bl'(\cdot).
\end{equation}
\end{lemma}

As the proof of Lemma \ref{lem:contraction-properties_complex-contraction_uniformly-continuous-tree-recursion} is technical and not particularly illuminating, we omit it from here and refer the reader to the discussion before \cite[Lemma 41]{michelen2023analyticity} for the main ideas.
We note that the proof is similar to that of Lemma \ref{lem:contraction-properties_complex-contraction_deviation-extent}: we only need to select an appropriate counterpart of the constant $B$ used there and then proceed by induction, exploiting the fact that we are working in a region of analyticity of the potential function $\psi$ thanks to Property \ref{itm:contraction-properties_complex-contraction_choices-of-constants_function-well-definedness}.

\begin{corollary}
\label{cor:contraction-properties_complex-contraction_final-result}
For all $\lambda\in[0,\tl)$, there exist $k\ge 1$, $\epsilon>0$, and complex neighborhoods $U_{1},U_{2}$ of $[0,\tz^{2}]$ with $\overline{U}_{1}\subset U_{2}$ such that the following holds.
Let $\bl$ be an activity function taking values in $\mathcal{N}_{\epsilon}([0,e^{-\lsconst}\lambda])$ and $\bl^{m}$ a modulation of $\bl$.
If the depth-$k$ boundary condition $\vec{\tau}$ in \eqref{eqn:tree-recursions_boundary-condition-choice} takes values in $\overline{U}_{2}$, then, with the depth-$k$ modulating function $\vec{\gamma}$ as in \eqref{eqn:tree-recursions_modulating-function-choice}, the depth-$k$ tree recursion $\pi_{\bl^{m},\vec{\tau},\vec{\gamma}}(v_{0})$ takes values in $U_{1}$ for all $v_{0}\in\R^{d}$.
\end{corollary}

\begin{proof}
With the same constants as chosen in \S\ref{sec:contraction-properties_complex-contraction_choices-of-constants}, we invoke Lemma \ref{lem:contraction-properties_complex-contraction_uniformly-continuous-tree-recursion} with $\epsilon'=(1-e^{-k\epsilon/2})\delta_{2}$ to get the corresponding $\delta'\in(0,\delta_{1}]$ such that its conclusion holds.
Let $\bl$ be an activity function taking values in $\mathcal{N}_{e^{-\lsconst}\delta'}([0,e^{-\lsconst}\lambda])$, $\bl^{m}$ a modulation of $\bl$, $\vec{\gamma}$ the depth-$k$ modulating function as in \eqref{eqn:tree-recursions_modulating-function-choice}, and $\vec{\tau}$ a depth-$k$ boundary condition taking values in $\psi^{-1}(\mathcal{N}_{\delta_{2}}([\sqrt{\delta},\sqrt{\delta+\tz^{2}}]))$.
As in the proof of Lemma \ref{lem:contraction-properties_complex-contraction_deviation-extent}, we introduce the projection maps $\pi_{1}(z)=\Re z$ for $z\in\C$, and
\begin{align}
\pi_{2}^{a}(x){}&:=x\indicator{[0,e^{-\lsconst}\lambda]}(x)+e^{-\lsconst}\lambda\indicator{(e^{-\lsconst}\lambda,\infty)}(x), \\
\pi_{2}^{b}(x){}&:=\sqrt{\delta}\indicator{(-\infty,\sqrt{\delta})}(x)+x\indicator{[\sqrt{\delta},\sqrt{\delta+\tz^{2}}]}(x)+\sqrt{\delta+\tz^{2}}\indicator{(\sqrt{\delta+\tz^{2}},\infty)}(x).
\end{align}
Since all these projection maps are continuous, the compositions $\bl':=\pi_{2}^{a}\circ\pi_{1}\circ\bl$ and $\vec{\tau}':=\psi^{-1}\circ\pi_{2}^{b}\circ\pi_{1}\circ\psi\circ\vec{\tau}$ are respectively an activity function and a depth-$k$ boundary condition such that $\norm{\bl-\bl'}_{\infty}\le e^{-\lsconst}\delta'$ and $\norm{\psi(\vec{\tau})-\psi(\vec{\tau}')}_{\infty}\le\delta_{2}$.
Moreover, $\bl'$ takes values in the real interval $[0,e^{-\lsconst}\lambda]$ and $\vec{\tau}'$ in $\psi^{-1}([\sqrt{\delta},\sqrt{\delta+\tz^{2}}])=[0,\tz^{2}]$.
Let $\bl^{\prime,m}$ be the modulation of $\bl'$ corresponding to $\bl^{m}$.
On the one hand, we have that
\begin{equation}
\label{eqn:contraction-properties_complex-contraction_final-result_change-boundary-condition}
\abs{\psi(\pi_{\bl^{m},\vec{\tau},\vec{\gamma}}(v_{0}))
-\psi(\pi_{\bl^{m},\vec{\tau}',\vec{\gamma}}(v_{0}))}
\le e^{-k \epsilon}
\norm{\psi(\vec{\tau})-\psi(\vec{\tau}')}_{\infty}
\le e^{-k \epsilon}\delta_{2}.
\end{equation}
by Corollary \ref{cor:contraction-properties_complex-contraction_direct-estimate}.
On the other hand, we have that
\begin{equation}
\label{eqn:contraction-properties_complex-contraction_final-result_change-activity-function}
\abs{\psi(\pi_{\bl^{m},\vec{\tau}',\vec{\gamma}}(v_{0}))-\psi(\pi_{\bl^{\prime,m},\vec{\tau}',\vec{\gamma}}(v_{0}))}\le(1-e^{-k\epsilon/2})\delta_{2}
\end{equation}
by Lemma \ref{lem:contraction-properties_complex-contraction_uniformly-continuous-tree-recursion}.
Combining \eqref{eqn:contraction-properties_complex-contraction_final-result_change-boundary-condition} and \eqref{eqn:contraction-properties_complex-contraction_final-result_change-activity-function} shows that
\begin{equation}
\abs{\psi(\pi_{\bl^{m},\vec{\tau},\vec{\gamma}}(v_{0}))
-\psi(\pi_{\bl^{\prime,m},\vec{\tau}',\vec{\gamma}}(v_{0}))}
\le(1-e^{-k\epsilon/2}+e^{-k\epsilon})\delta_{2}=:\delta_{3}<\delta_{2}.
\end{equation}
We note that it follows from \eqref{eqn:contraction-properties_complex-contraction_self-map-condition} that $\psi(\pi_{\bl^{\prime,m},\vec{\tau}',\vec{\gamma}}(v_{0}))\in[0,\tz^{2}]$.
Thus, we have proven
that $\psi(\pi_{\bl^{m},\vec{\tau},\vec{\gamma}}(v_{0}))\in\mathcal{N}_{\delta_{3}}([0,\tz^{2}])$.
The corollary follows by taking $k$ as in \S\ref{sec:contraction-properties_complex-contraction_choices-of-constants}, $\epsilon=e^{-\lsconst}\delta'$, $U_{1}=\psi^{-1}(\mathcal{N}_{\delta_{3}}([0,\tz^{2}]))$, and $U_{2}=\psi^{-1}(\mathcal{N}_{\delta_{2}}([0,\tz^{2}]))$, after noting that $\psi^{-1}$ is a conformal map (see also the proof of Lemma \ref{lem:contraction-properties_complex-contraction_convex-image}).
\end{proof}

\section{Zero-freeness}
\label{sec:zero-freeness}

In this section, we use the contraction properties of finite-depth tree recursions (see Corollary \ref{cor:contraction-properties_complex-contraction_final-result}) to prove that, for all $\lambda\in[0,e^{-\lsconst}\tl)$ and complex-valued activity functions $\bl$ taking values in a neighborhood of $[0,\lambda]$, the finite-volume pressure $\frac{1}{\abs{\Lambda}}\log Z_{\Lambda}(\bl)$ is uniformly bounded in modulus.
This implies, in particular, that the partition function $Z_{\Lambda}(\lambda)$ is nonzero for activities $\lambda$ in this complex domain, for all bounded regions $\Lambda\subset\R^{d}$.

To this end, we will heavily exploit the uniform continuity of the partition function with respect to its activity function:

\begin{lemma}
\label{lem:zero-freeness_Z-uniform-continuity}
For any bounded, Lebesgue measurable region $\Lambda\subset\R^{d}$ and $M>0$, the map $\bl\mapsto Z_{\Lambda}(\bl)$, defined on the set of activity functions $\bl$ in $\Lambda$ with $\abs{\bl(x)}\le M$ for all $x\in\Lambda$, is uniformly continuous with respect to the $\infty$-norm.
\end{lemma}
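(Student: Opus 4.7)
The plan is to establish the stronger statement that $Z_\Lambda$ is in fact Lipschitz with respect to the $\infty$-norm on the ball of activity functions of $\infty$-norm at most $M$. This will follow at once from a termwise bound on the series \eqref{eqn:generalized-densities_partition-function}, built from two simple ingredients: an upper bound on the Boltzmann weight coming from local stability, and a telescoping estimate for the difference of products.

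First, for any configuration $(x_1,\dots,x_n) \in \Lambda^n$ with $U(x_1,\dots,x_n) < \infty$, the stability bound \eqref{eqn:introduction_stability} derived from local stability gives $U(x_1,\dots,x_n) \ge -\lsconst n/2$, hence $e^{-U(x_1,\dots,x_n)} \le e^{\lsconst n/2}$; when $U = \infty$ the integrand vanishes, so this upper bound on $e^{-U}$ may be used uniformly over $\Lambda^n$. Second, for activity functions $\bl_1,\bl_2$ with $\|\bl_i\|_\infty \le M$, a standard telescoping identity gives
\begin{equation}
\left|\prod_{i=1}^n \bl_1(x_i) - \prod_{i=1}^n \bl_2(x_i)\right| \le n M^{n-1}\|\bl_1 - \bl_2\|_\infty.
\end{equation}

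Combining these two bounds in the series \eqref{eqn:generalized-densities_partition-function} yields
\begin{equation}
|Z_\Lambda(\bl_1) - Z_\Lambda(\bl_2)| \le \|\bl_1-\bl_2\|_\infty \sum_{n=1}^\infty \frac{|\Lambda|^n\, n M^{n-1} e^{\lsconst n/2}}{n!} = |\Lambda|e^{\lsconst/2}\exp\left(M|\Lambda|e^{\lsconst/2}\right)\|\bl_1-\bl_2\|_\infty,
\end{equation}
which is finite since $\Lambda$ is bounded. This is the desired Lipschitz estimate, and uniform continuity follows immediately. There is no real obstacle here: the only subtlety is ensuring absolute convergence of the dominating series, which is precisely what local stability delivers, and handling configurations of infinite energy, which is handled by the convention $e^{-\infty} = 0$.
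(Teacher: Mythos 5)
Your proof is correct and follows essentially the same route as the paper: the telescoping bound $nM^{n-1}\|\bl_1-\bl_2\|_\infty$ on the difference of products, the stability estimate $e^{-U}\le e^{\lsconst n/2}$, and termwise summation yielding the same Lipschitz constant $|\Lambda|e^{\lsconst/2}e^{M|\Lambda|e^{\lsconst/2}}$. No gaps; framing the conclusion as a Lipschitz estimate is a harmless strengthening of the stated uniform continuity.
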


\begin{proof}
The proof requires only minor modification from that of \cite[Lemma 9]{michelen2023analyticity} thanks to the stability of the pair potential.
We present it here for completeness.
Let $\bl_{1}$, $\bl_{2}$ be activity functions on $\Lambda$ satisfying $\norm{\bl_{1}}_{\infty},\norm{\bl_{2}}_{\infty}\le M$ with $M\ge 1$ and $\norm{\bl_{1}-\bl_{2}}_{\infty}\le\delta$.
Then,
\begin{equation}
\abs{Z_{\Lambda}(\bl_{1})-Z_{\Lambda}(\bl_{2})}
\le\sum_{n=1}^{\infty}\frac{1}{n!}\int_{\Lambda^{n}}\dd{x}_{1}\dots\dd{x}_{n}\abs{\prod_{i=1}^{n}\bl_{1}(x_{i})-\prod_{i=1}^{n}\bl_{2}(x_{i})}\cdot e^{-U(x_{1},\dots,x_{n})}.
\end{equation}
By \cite[Lemma 10]{michelen2023analyticity}, we bound
\begin{equation}
\abs{\prod_{i=1}^{n}\bl_{1}(x_{i})-\prod_{i=1}^{n}\bl_{2}(x_{i})}
\le nM^{n-1}\delta.
\end{equation}
Moreover, we bound $e^{-U(x_{1},\dots,x_{n})}\le e^{\lsconst n/2}$ using stability (see \eqref{eqn:introduction_stability}).
Thus,
\begin{equation}
\abs{Z_{\Lambda}(\bl_{1})-Z_{\Lambda}(\bl_{2})}
\le\sum_{n=1}^{\infty}\frac{1}{n!}\abs{\Lambda}^{n}nM^{n-1}\delta e^{\lsconst n/2}
=e^{\lsconst/2}\abs{\Lambda}e^{e^{\lsconst/2}M\abs{\Lambda}}\delta,
\end{equation}
which proves uniform continuity.
\end{proof}

\begin{theorem}
\label{thm:zero-freeness}
For each $\lambda\in[0,e^{-\lsconst}\tilde{\lambda})$, there exist constants $\epsilon,c>0$ such that the following holds for all $\Lambda\subset\R^{d}$.
Let $\bl:\Lambda\rightarrow\C$ be an activity function taking values in $\mathcal{N}_{\epsilon}([0,\lambda])$.
Then, $Z_{\Lambda}(\bl)\ne 0$, and
\begin{equation}
\abs{\log Z_{\Lambda}(\bl)}\le c\abs{\Lambda}.
\end{equation}
\end{theorem}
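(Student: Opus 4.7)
The plan is to reduce the theorem to an a priori claim and then prove that claim by a homotopy/connectedness argument along $t\bl$. Specifically, I want to establish that under the hypotheses, $\bl$ is totally zero-free and every generalized density $\rho_{\bl^{m}}(v)$ of a modulation lies in the fixed-point neighborhood $U_{1}$ furnished by Proposition \ref{prop:complex-contraction}. Granted the claim, Lemma \ref{lem:generalized-densities_log-Z} gives $\log Z_{\Lambda}(\bl)=\int_{\Lambda}\rho_{\hat{\bl}_{x}}(x)\dd{x}$; since each $\hat{\bl}_{x}$ is itself a modulation of $\bl$, the integrand satisfies $|\rho_{\hat{\bl}_{x}}(x)|\le c$ for $c:=\sup_{z\in\overline{U}_{1}}|z|<\infty$, yielding $Z_{\Lambda}(\bl)\ne 0$ and $|\log Z_{\Lambda}(\bl)|\le c|\Lambda|$, with $c$ independent of $\Lambda$. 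For the setup, I set $\lambda_{0}:=e^{\lsconst}\lambda\in[0,\tilde{\lambda})$ and invoke Proposition \ref{prop:complex-contraction} at $\lambda_{0}$ to extract $\epsilon_{0}$ and neighborhoods $U_{1}\subset\overline{U}_{1}\subset U_{2}$, then take $\epsilon\le e^{-\lsconst}\epsilon_{0}$ so that, by the pointwise bound in Lemma \ref{lem:generalized-densities_modulation-properties}, every modulation $\bl^{m}$ of any $\bl:\Lambda\to\mathcal{N}(\lambda,\epsilon)$ takes values in $\mathcal{N}(\lambda_{0},\epsilon_{0})$ (the real-valued modulation multiplier lies in $[0,e^{\lsconst}]$).

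Next, along the homotopy $\bl_{t}:=t\bl$ for $t\in[0,1]$, let $T\subseteq[0,1]$ consist of those $t$ for which (P1) $\bl_{t}$ is totally zero-free and (P2) $\rho_{(\bl_{t})^{m}}(v)\in\overline{U}_{2}$ for every modulation $(\bl_{t})^{m}$ and every $v$ with $U(x_{1},\dots,x_{n},v)<\infty$. Trivially $0\in T$, and it suffices to show $T$ is closed and open in $[0,1]$. For closedness, if $t_{n}\in T$ with $t_{n}\to t^{*}$, Theorem \ref{thm:recursion} rewrites every (P2) density as an output of the functional $F$ applied to a $\overline{U}_{2}$-valued density (each $\bl^{m}_{v\to w}$ that appears being itself a modulation of $\bl_{t_{n}}$), so Proposition \ref{prop:complex-contraction} actually places these densities in $U_{1}$; applying Lemma \ref{lem:generalized-densities_log-Z} to each totally zero-free $(\bl_{t_{n}})^{m}$ yields the uniform lower bound $|Z_{\Lambda}((\bl_{t_{n}})^{m})|\ge e^{-c|\Lambda|}$. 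Since $\|(\bl_{t_{n}})^{m}-(\bl_{t^{*}})^{m}\|_{\infty}\le e^{\lsconst}(\lambda+\epsilon)|t_{n}-t^{*}|$ uniformly in $m$, Lemma \ref{lem:zero-freeness_Z-uniform-continuity} passes this lower bound to $t^{*}$ (giving (P1)), and pointwise continuity of $\rho$ combined with the closedness of $\overline{U}_{2}$ gives (P2). For openness at $t_{0}\in T$, the same contraction places $\rho_{(\bl_{t_{0}})^{m}}(v)\in U_{1}$; since $\overline{U}_{1}$ is compact in the open set $U_{2}$, the gap $d_{0}:=\mathrm{dist}(\overline{U}_{1},\C\setminus U_{2})>0$, and combining the uniform lower bound $|Z_{\Lambda}((\bl_{t_{0}})^{m})|\ge e^{-c|\Lambda|}$ with Lemma \ref{lem:zero-freeness_Z-uniform-continuity} forces $|\rho_{(\bl_{t})^{m}}(v)-\rho_{(\bl_{t_{0}})^{m}}(v)|<d_{0}/2$ uniformly in $m,v$ for $|t-t_{0}|$ small enough (possibly depending on $\Lambda$), thereby placing $\rho_{(\bl_{t})^{m}}(v)\in\overline{U}_{2}$ and preserving (P1) via the same continuity estimate.

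The main obstacle will be the openness step, where pointwise continuity at a single base point $t_{0}$ must be promoted to a statement uniform over the infinite family of modulations $m$ and evaluation points $v$. Three ingredients conspire to make this possible and must be combined carefully: the strict contraction gap $d_{0}>0$ ensured by the inclusion $\overline{U}_{1}\subset U_{2}$ in Proposition \ref{prop:complex-contraction}; the uniform pointwise bound $e^{\lsconst}\|\bl\|_{\infty}$ on modulations from Lemma \ref{lem:generalized-densities_modulation-properties}, which ensures all modulation-based activity functions and their $e^{-\phi(v-\cdot)}$-twists lie in a common $\infty$-norm ball (hence uniform continuity applies to all of them simultaneously); and the $m$-, $v$-uniform continuity of $Z_{\Lambda}$ supplied by Lemma \ref{lem:zero-freeness_Z-uniform-continuity}.
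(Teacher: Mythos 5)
Your proposal is correct and follows essentially the same route as the paper: the same interpolation $t\bl$, the same invariant set (total zero-freeness plus densities of all modulations in $\overline{U}_2$), and the same three ingredients (the recursion of Theorem \ref{thm:recursion} combined with Proposition \ref{prop:complex-contraction} to pull densities strictly into $U_1$, Lemma \ref{lem:generalized-densities_log-Z} for the two-sided bound $e^{-c\abs{\Lambda}}\le\abs{Z}\le e^{c\abs{\Lambda}}$ on modulations, and Lemma \ref{lem:zero-freeness_Z-uniform-continuity} to propagate zero-freeness and density membership uniformly over modulations). The only difference is packaging: your open-and-closed connectedness argument in $t$ is the paper's argument with $s_{\ast}:=\sup\set{s\mid\bl_{s}\in\mathcal{M}_{\ast}}$, where Claim \ref{clm:zero-freeness_criticality} plays the role of closedness and Claim \ref{clm:zero-freeness_enlargement} together with the contraction step plays the role of openness.
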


\begin{proof}
Let $\lambda\in[0,e^{-\lsconst}\tilde{\lambda})$.
Applying Corollary \ref{cor:contraction-properties_complex-contraction_final-result} with to the activity parameter $e^{\lsconst}\lambda$, we obtain $k\ge 1$, $\epsilon>0$, and complex neighborhoods $U_{1},U_{2}$ of $[0,\tz^{2}]$ such that its conclusion holds.
Let $\bl:\Lambda\rightarrow\C$ be an activity function which takes values in $\mathcal{N}_{\epsilon}([0,\lambda])$.

\subparagraph{Interpolation}

Let us define $\bl_{s}(\cdot):=s\bl(\cdot)$ for $s\in[0,1)$ and set $\bl_{1}(\cdot):=\bl(\cdot)$ so that $\bl_{s}$ interpolates between zero activity and $\bl$.
We denote by $\mathcal{M}_{\ast}$ the set of $\bl_{s}$ which are totally zero-free (recall Definition \ref{def:tree-recursions_totally-zero-free}) and have the property that the one-point density of every modulation of $\bl_{s}$ takes values in $\overline{U}_{2}$:
\begin{equation}
\begin{multlined}
\mathcal{M}_{\ast}:=\left\{\bl_{s}\mid s\in[0,1],\ \bl_{s}\text{ is totally zero-free,}\right.
\\
\left.\text{ and }\rho_{\bl_{s}^{m}}(v_{0})\in\overline{U}_{2}\text{ for all }v_{0}\in\R^{d}\text{ and modulations }\bl_{s}^{m}\text{ of }\bl_{s}\right\}.
\end{multlined}
\end{equation}
Following the convention earlier, we denote by $\bl_{s}^{m}$ a generic modulation of $\bl_{s}$.

We will prove that $\bl=\bl_{1}\in\mathcal{M}_{\ast}$, that is, $\bl$ is totally zero-free, and
\begin{equation}
\label{eqn:zero-freeness_property}
\rho_{\bl^{m}}(v_{0})\in\overline{U}_{2}\text{ for all }v_{0}\in\R^{d}\text{ and modulations }\bl^{m}\text{ of }\bl.
\end{equation}
Note that this immediately implies the theorem: $Z_{\Lambda}(\bl)\ne 0$ because $\bl$ is totally zero-free, and, setting $c:=\max\set{\abs{z}\mid z\in\overline{U}_{2}}$ and applying Lemma \ref{lem:one-point-densities_log-Z}, we have that
\begin{equation}
\label{eqn:zero-freeness_desired-bound}
\abs{\log Z_{\Lambda}(\bl)}\le\int_{\Lambda}\dd{x}\abs{\rho_{\hat{\bl}_{x}}(x)}\le c\abs{\Lambda}.
\end{equation} 

Suppose by contradiction that $\bl_{1}\not\in\mathcal{M}_{\ast}$.
We observe the following elementary properties of the interpolated activity functions:
\begin{enumerate}
\item $\bl_{1}\not\in\mathcal{M}_{\ast}$, which is our assumption.
\item $\bl_{0}\in\mathcal{M}_{\ast}$, since the only modulation of $\bl_{0}$ is the identically zero activity function, whose one-point density is identically zero by \eqref{eqn:one-point-densities_def}.
\item The membership of $\bl_{s}$ in $\mathcal{M}_{\ast}$ is decreasing in $s$: if $0<s'<s\le 1$ and $\bl_{s}\in\mathcal{M}_{\ast}$, then $\bl_{s'}\in\mathcal{M}_{\ast}$.
This is because every modulation of $\bl_{s'}$ is a modulation of $\bl_{s}$ (recall \eqref{eqn:tree-recursions_modulation}):
\begin{equation}
\begin{multlined}
\alpha\indicator{A}(\cdot)\prod_{i=1}^{n}\left[1+(e^{-\phi(x_{i}-\cdot)}-1)\indicator{B_{t_{i}}(x_{i})}(\cdot)\right]\bl_{s'}(\cdot)
\\
=(s'\alpha/s)\indicator{A}(\cdot)\prod_{i=1}^{n}\left[1+(e^{-\phi(x_{i}-\cdot)}-1)\indicator{B_{t_{i}}(x_{i})}(\cdot)\right]\bl_{s}(\cdot).
\end{multlined}
\end{equation}
\end{enumerate} 
Hence, it is natural to examine where the membership of $\bl_{s}$ in $\mathcal{M}_{\ast}$ ends: let
\begin{equation}
\label{eqn:zero-freeness_supremum}
s_{\ast}:=\sup\set{s\in[0,1]\mid\bl_{s}\in\mathcal{M}_{\ast}},
\end{equation}
and introduce the shorthand $\bl_{*}:=\bl_{s_{\ast}}$.

\subparagraph{Properties of $s_{\ast}$}

Now, we show that $s_{\ast}\in(0,1)$.

First, we prove that $s_{\ast}>0$.
We rely on the following technical result which we will reuse in a later part of the proof.
Essentially, it gives a condition under which we can extend the interval of $s$ for which $\bl_{s}$ is totally zero-free, while maintaining control on the growth (in modulus) of the one-point densities of \emph{all} modulations of $\bl_{s}$ over the extended interval.

\begin{claim}
\label{clm:zero-freeness_enlargement}
Suppose that $\bl_{s}$ is totally zero-free for some $s\in[0,1)$ and that there exists a constant $c>0$ such that $\abs{\log Z(\bl_{s}^{m})}\le c\abs{\Lambda}$ for all modulations $\bl_{s}^{m}$ of $\bl_{s}$.
Then, for all $\epsilon>0$, there exists $s'\in(s,1]$ such that:
\begin{enumerate}
\item $\bl_{s'}$ is totally zero-free, and \label{itm:zero-freeness_enlargement_1}
\item for all modulations $\bl_{s'}^{m}$ of $\bl_{s'}$, there exists a corresponding modulation $\bl_{s}^{m}$ of $\bl_{s}$ such that $\norm{\rho_{\bl_{s'}^{m}}-\rho_{\bl_{s}^{m}}}_{\infty}\le \epsilon$. \label{itm:zero-freeness_enlargement_2}
\end{enumerate}
\end{claim}

Claim \ref{clm:zero-freeness_enlargement} is a consequence of the uniform continuity established in Lemma \ref{lem:zero-freeness_Z-uniform-continuity}; we defer its proof to the end of the section.
From the claim, it immediately follows that $s_{\ast}>0$.
Indeed, since $\bl_{0}\equiv 0$ clearly satisfies its assumptions, it allows us to find $s'\in(0,1]$ such that $\bl_{s'}$ is totally zero-free and, for all modulations $\bl_{s'}^{m}$ of $\bl_{s'}$, $\norm{\rho_{\bl_{s'}^{m}}}_{\infty}\le\frac{1}{2}d(0,\partial U_{2})$ (recall that $\rho_{0}\equiv 0$ as observed before), which implies that $s'\in\mathcal{M}_{\ast}$.

Next, we prove that $s_{\ast}<1$.
To this end, we use another consequence of uniform continuity, whose proof we again defer to the end of the section.
\begin{claim}
\label{clm:zero-freeness_criticality}
$\bl_{\ast}\in\mathcal{M}_{\ast}$.
\end{claim}
By assumption, $\bl_{1}\not\in\mathcal{M}_{\ast}$, so $s_{\ast}\ne 1$ by Claim \ref{clm:zero-freeness_criticality}, which forces $s_{\ast}<1$.

\subparagraph{Recursion}

Next, we prove the stronger statement that the one-point density of every modulation of $\bl_{\ast}$ actually takes values in $U_{1}$ (as opposed to just $\overline{U}_{2}$):
\begin{equation}
\rho_{\bl^{m}_{\ast}}(v_{0})\in U_{1}\text{ for all }v_{0}\in\R^{d}\text{ and modulations }\bl^{m}_{\ast}\text{ of }\bl_{\ast}.
\end{equation}
Let $\bl^{m}_{\ast}$ be a modulation of $\bl_{\ast}$.
By Theorem \ref{thm:tree-recursions_density-correspondence}, we construct a depth-$k$ tree recursion which computes the one-point density $\rho_{\bl^{m}_{\ast}}(v_{0})$:
\begin{equation}
\rho_{\bl^{m}_{\ast}}(v_{0})=\pi_{\bl^{m}_{\ast},\vec{\tau},\vec{\gamma}}(v_{0}),
\end{equation}
where we recall from \eqref{eqn:tree-recursions_boundary-condition-choice} that the boundary condition $\vec{\tau}$ is given by 
\begin{equation}
\vec{\tau}(v_{0},\dots,v_{k}):=
\begin{cases}
\rho_{\bl^{m}_{\ast,v_{0}\rightarrow\dots\rightarrow v_{k}}}(v_{k}) & \text{if } (v_{0},\dots,v_{k})\in G^{(k+1)} \\
0 & \text{else}
\end{cases}.
\end{equation}
By the same proof as of Claim \ref{clm:tree-recursions_density-correspondence_interpolation}, for all $(v_{0},\dots,v_{k})\in G^{(k+1)}$, the activity function $\bl^{m}_{\ast,v_{0}\rightarrow\dots\rightarrow v_{k}}$ is a modulation of $\bl^{m}_{\ast}$, so $\rho_{\bl^{m}_{\ast,v_{0}\rightarrow\dots\rightarrow v_{k}}}(v_{k})\in\overline{U}_{2}$ by Claim \ref{clm:zero-freeness_criticality}.
Since $0\in\overline{U}_{2}$ also, we conclude that the boundary condition $\vec{\tau}$ takes values in $\overline{U}_{2}$.
By Corollary \ref{cor:contraction-properties_complex-contraction_final-result}, this implies that $\rho_{\bl_{\ast}^{m}}(v)\in U_{1}$, as required.

\subparagraph{Contradiction}

Since $d(\overline{U}_{1},U_{2}^{c})>0$, we can apply Claim \ref{clm:zero-freeness_enlargement} again to find $s'\in(s_{\ast},1]$ such that $\bl_{s'}$ is totally zero-free and, for all modulations $\bl_{s'}^{m}$ of $\bl_{s'}$, there exists a corresponding modulation $\bl_{s}^{m}$ of $\bl_{s}$ such that $\norm{\rho_{\bl_{s'}^{m}}-\rho_{\bl_{s}^{m}}}_{\infty}\le\frac{1}{2}d(\overline{U}_{1},U_{2}^{c})$.
However, this implies that $\rho_{\bl_{s'}^{m}}$ takes values in $\overline{U}_{2}$ for all modulations $\bl_{s'}^{m}$ of $\bl_{s'}$, and hence that $\bl_{s'}\in\mathcal{M}_{\ast}$, which contradicts the definition \eqref{eqn:zero-freeness_supremum} of $s_{\ast}$.
\end{proof}

Our main theorem (see Theorem \ref{thm:analyticity}) on the analyticity of the infinite-volume pressure follows immediately from the zero-freeness of the finite-volume partition functions (see Theorem \ref{thm:zero-freeness}) by an application of Vitali's convergence theorem.

\begin{theorem}[Vitali's convergence theorem, {\cite[Theorem 20]{michelen2023analyticity}}]
\label{thm:Vitali}
Let $\Omega\subset\C$ be a domain and $f_{n}$ a sequence of analytic functions on $\Omega$ such that $\abs{f_{n}(z)}\le M$ for some $M>0$ for all $n$ and $z\in\Omega$.
If there exists a sequence of distinct points $z_{m}\in\Omega$ with $z_{m}\rightarrow z_{\infty}\in\Omega$ such that $\lim_{n\rightarrow\infty}f_{n}(z_{m})$ exists for all $m$, then $f_{n}$ converges uniformly on compact subsets of $\Omega$ to an analytic function $f$ on $\Omega$.
\end{theorem}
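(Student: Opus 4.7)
The plan is to prove Vitali's theorem by combining a normal families argument (Montel's theorem) with the identity theorem for analytic functions. The hypothesis of uniform boundedness provides compactness in the topology of uniform convergence on compact subsets, while the hypothesis of pointwise convergence along a sequence with an accumulation point in $\Omega$ pins down the limit uniquely.

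First, I would invoke Montel's theorem: a sequence of analytic functions on a domain $\Omega\subset\C$ that is uniformly bounded is a normal family, so it is precompact in the topology of uniform convergence on compact subsets of $\Omega$. In particular, every subsequence of $(f_{n})$ has a further subsequence $(f_{n_{k}})$ that converges uniformly on compact subsets of $\Omega$ to some analytic function $g$ on $\Omega$ (analyticity of $g$ follows from Weierstrass's theorem, since local uniform limits of analytic functions are analytic).

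Second, I would argue that every such subsequential limit must coincide with a single analytic function $f$. Fix any such subsequential limit $g$. For each $m$, uniform convergence on the compact set $\{z_{m}\}$ implies $g(z_{m})=\lim_{k\to\infty}f_{n_{k}}(z_{m})=\lim_{n\to\infty}f_{n}(z_{m})$, where the last equality uses the existence of the full limit at $z_{m}$ guaranteed by hypothesis. Hence any two subsequential limits $g_{1}, g_{2}$ agree on the set $\{z_{m}\}_{m\ge 1}$, which has $z_{\infty}\in\Omega$ as an accumulation point. By the identity theorem, $g_{1}\equiv g_{2}$ on $\Omega$. Call this common limit $f$; it is analytic on $\Omega$.

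Third, I would upgrade uniqueness of subsequential limits to convergence of the full sequence. Suppose for contradiction that $f_{n}$ does not converge to $f$ uniformly on some compact $K\subset\Omega$. Then there exist $\varepsilon_{0}>0$ and a subsequence $(f_{n_{j}})$ with $\sup_{z\in K}\abs{f_{n_{j}}(z)-f(z)}\ge\varepsilon_{0}$ for all $j$. Applying Montel's theorem to $(f_{n_{j}})$ yields a further subsequence converging uniformly on compact subsets to an analytic function; by the previous step, that limit must be $f$, contradicting the lower bound on $K$. Hence $f_{n}\to f$ uniformly on every compact subset of $\Omega$.

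The main obstacle is the pointwise-to-uniform upgrade in the second and third steps: pointwise convergence of $f_{n}(z_{m})$ at countably many points does not a priori even determine $f$ on $\Omega$, let alone guarantee uniform convergence of $f_{n}$. The leverage comes entirely from pairing the Montel-type compactness (which makes every accumulation point of $(f_{n})$ analytic) with the identity theorem (which forces analytic functions agreeing on a set with a limit point in $\Omega$ to be equal); without the hypothesis that $z_{\infty}\in\Omega$ rather than on $\partial\Omega$, this rigidity would fail.
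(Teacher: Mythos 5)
Your proof is correct: the combination of Montel's theorem (normality from uniform boundedness), Weierstrass's theorem (analyticity of locally uniform limits), the identity theorem applied to the set $\set{z_{m}}$ accumulating at $z_{\infty}\in\Omega$, and the subsequence-of-subsequences argument is the standard and complete proof of Vitali's convergence theorem. Note, however, that the paper does not prove this statement at all -- it is quoted as a known result from the cited reference and used as a black box -- so there is no proof in the paper to compare against; your argument is simply the standard textbook derivation, and it is sound as written.
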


\begin{proof}[Proof of Theorem \ref{thm:analyticity}]
Let $\lambda_{0}\in[0,e^{-\lsconst}\tilde{\lambda})$, and consider the complex domain $\mathcal{N}_{\epsilon}([0,\lambda_{0}])$ given by Theorem \ref{thm:zero-freeness}.
By Theorem \ref{thm:zero-freeness}, for any bounded, Lebesgue measurable region $\Lambda\subset\R^{d}$, the finite-volume pressure $p_{\Lambda}(\lambda):=\frac{1}{\abs{\Lambda}}\log Z_{\Lambda}(\lambda)$ is an analytic function of $\lambda$ on $\mathcal{N}_{\epsilon}([0,\lambda_{0}])$, and on which it is uniformly bounded by a constant $c>0$.
It is well-known that, for $\lambda\ge 0$, the limit $\lim_{\Lambda\uparrow\R^{d}}p_{\Lambda}(\lambda)$ exists and is equal to $p_{\phi}(\lambda)$.
Therefore, by Theorem \ref{thm:Vitali}, $p_{\Lambda}(\lambda)$ converges uniformly on compact subsets of $\mathcal{N}_{\epsilon}([0,\lambda_{0}])$ to an analytic function.
\end{proof}

At last, we prove the two claims used in the proof of Theorem \ref{thm:zero-freeness}.

\begin{proof}[Proof of Claim \ref{clm:zero-freeness_enlargement}]
We first establish Property \ref{itm:zero-freeness_enlargement_1}.
Let $s'\in(s,1]$ and $\bl_{s'}^{m}$ a modulation of $\bl_{s'}$.
Notice that $s\bl_{s'}^{m}/s'$ is a modulation of $\bl_{s}$.
Using the elementary inequality $\abs{\log\abs{z}}\le\abs{\log z}$ which holds for all $z\in\C\setminus\set{0}$, we get that 
\begin{equation}
\label{eqn:zero-freeness_enlargement_log-Z}
\abs{\log\abs{Z(s\bl_{s'}^{m}/s')}}\le\abs{\log Z(s\bl_{s'}^{m}/s')}\le c\abs{\Lambda}.
\end{equation}
At the same time, by Lemmas \ref{lem:one-point-densities_uniform-boundedness} and \ref{lem:zero-freeness_Z-uniform-continuity}, there exists $\delta_{1}>0$ such that, whenever two modulations $\bl^{m,1},\bl^{m,2}$ of $\bl$ satisfies $\norm{\bl^{m,1}-\bl^{m,2}}_{\infty}\le\delta_{1}$, we have that 
\begin{equation}
\label{eqn:zero-freeness_enlargement_Z-uniform-continuity}
\abs{Z(\bl^{m,1})-Z(\bl^{m,2})}\le\frac{1}{2e^{c\abs{\Lambda}}}.
\end{equation}
Since 
\begin{equation}
\label{eqn:zero-freeness_enlargement_activity-difference}
\norm{\bl_{s'}^{m}-s\bl_{s'}^{m}/s'}_{\infty}=(1-s/s')\norm{\bl_{s'}^{m}}_{\infty}\le(s'-s)e^{\lsconst}\norm{\bl}_{\infty},
\end{equation}
we see that, if we take $s'$ so close to $s$ that
\begin{equation}
\label{eqn:zero-freeness_enlargement_zero-free-condition}
(s'-s)e^{\lsconst}\norm{\bl}_{\infty}\le\delta_{1},
\end{equation}
then, by \eqref{eqn:zero-freeness_enlargement_log-Z} and \eqref{eqn:zero-freeness_enlargement_Z-uniform-continuity},
\begin{equation}
\label{eqn:zero-freeness_enlargement_enlarged-modulation-lower-bound}
\abs{Z(\bl_{s'}^{m})}\ge\abs{Z(s\bl_{s'}^{m}/s')}-\abs{Z(\bl_{s'}^{m})-Z(s\bl_{s'}^{m}/s')}\ge\frac{1}{2e^{c\abs{\Lambda}}}.
\end{equation}
Therefore, as long as $s'\in(s,1]$ satisfies the condition \eqref{eqn:zero-freeness_enlargement_zero-free-condition}, $\bl_{s'}$ is totally zero-free.

We now move on to Property \ref{itm:zero-freeness_enlargement_2}.
Suppose that $s'\in(s,1]$ satisfies \eqref{eqn:zero-freeness_enlargement_zero-free-condition}.
Let $\bl_{s'}^{m}$ be a modulation of $\bl_{s'}$, and consider again the modulation $s\bl_{s'}^{m}/s'$ of $\bl_{s}$.
By total zero-freeness, both $\bl_{s'}^{m}$ and $s\bl_{s'}^{m}/s'$ have well-defined one-point densities.
Fixing $v\in\R^{d}$ and recalling \eqref{eqn:one-point-densities_def}, we bound
\begin{equation}
\label{eqn:zero-freeness_enlargement_density-difference}
\begin{multlined}
\abs{\rho_{\bl_{s'}^{m}}(v)-\rho_{s\bl_{s'}^{m}/s'}(v)}
\le\frac{\abs{\bl_{s'}^{m}(v)}\abs{Z(\bl_{s'}^{m}e^{-\phi(v-\cdot)})}}{\abs{Z(\bl_{s'}^{m})}\abs{Z(s\bl_{s'}^{m}/s')}}\abs{Z(s\bl_{s'}^{m}/s')-Z(\bl_{s'}^{m})}
\\
+\frac{\abs{\bl_{s'}^{m}(v)}}{\abs{Z(s\bl_{s'}^{m}/s')}}\abs{Z(\bl_{s'}^{m}e^{-\phi(v-\cdot)})-Z(s\bl_{s'}^{m}e^{-\phi(v-\cdot)}/s')}
\\
+\frac{\abs{Z(s\bl_{s'}^{m}e^{-\phi(v-\cdot)}/s')}}{\abs{Z(s\bl_{s'}^{m}/s')}}\abs{\bl_{s'}^{m}(v)-s\bl_{s'}^{m}(v)/s'}
\end{multlined}
\end{equation}
(that is, we use the triangle inequality to change the instances of $\bl_{s'}^{m}$ in the definition of $\rho_{\bl_{s'}^{m}}(v)$ one by one into $s\bl_{s'}^{m}/s'$).
Note that all the partition functions that appear in the denominators in \eqref{eqn:zero-freeness_enlargement_density-difference} involve modulations of $\bl_{s'}$, and so they satisfy the lower bound \eqref{eqn:zero-freeness_enlargement_enlarged-modulation-lower-bound}.
On the other hand, the activity functions and partition functions in the numerators are uniformly bounded above: taking $\abs{Z(\bl_{s'}^{m}e^{-\phi(v-\cdot)})}$ for example, since $\norm{\bl_{s'}^{m}e^{-\phi(v-\cdot)}}_{\infty}\le e^{2\lsconst}\norm{\bl}_{\infty}$, we have, as in the proof of Lemma \ref{lem:zero-freeness_Z-uniform-continuity}, that
\begin{equation}
\abs{Z(\bl_{s'}^{m}e^{-\phi(v-\cdot)})}
\le\sum_{n=0}^{\infty}\frac{1}{n!}\int_{\Lambda^{n}}\dd{x}_{1}\dots\dd{x}_{n}(e^{2\lsconst}\norm{\bl}_{\infty})^{n}e^{\lsconst n/2}
=\exp[e^{5\lsconst/2}\norm{\bl}_{\infty}\abs{\Lambda}].
\end{equation}
Thus, there exist constants $c_{1},c_{2},c_{3}>0$ such that 
\begin{equation}
\begin{multlined}
\abs{\rho_{\bl_{s'}^{m}}(v)-\rho_{s\bl_{s'}^{m}/s'}(v)}
\le c_{1}\abs{Z(s\bl_{s'}^{m}/s')-Z(\bl_{s'}^{m})}
\\
+c_{2}\abs{Z(\bl_{s'}^{m}e^{-\phi(v-\cdot)})-Z(s\bl_{s'}^{m}e^{-\phi(v-\cdot)}/s')}
+c_{3}\abs{\bl_{s'}^{m}(v)-s\bl_{s'}^{m}(v)/s'}.
\end{multlined}
\end{equation}
Let $\epsilon>0$.
By Lemma \ref{lem:zero-freeness_Z-uniform-continuity}, there exists $\delta_{2}>0$ such that, whenever two activity functions $\bl_{1},\bl_{2}$ bounded in modulus by $e^{2\lsconst}\norm{\bl}_{\infty}$ satisfy $\norm{\bl_{1}-\bl_{2}}_{\infty}\le\delta_{2}$, we have that 
\begin{equation}
\abs{Z(\bl_{1})-Z(\bl_{2})}\le\frac{\epsilon}{2(c_{1}+c_{2})}.
\end{equation}
Recalling \eqref{eqn:zero-freeness_enlargement_activity-difference}, by taking $s'$ so close to $s$ that
\begin{equation}
(s'-s)e^{\lsconst}\norm{\bl}_{\infty}\le\min\set{e^{-\lsconst}\delta_{2},\frac{\epsilon}{2c_{3}}},
\end{equation}
we find that $\norm{\rho_{\bl_{s'}^{m}}-\rho_{s\bl_{s'}^{m}/s'}}_{\infty}\le\epsilon$, as desired.
\end{proof}

\begin{proof}[Proof of Claim \ref{clm:zero-freeness_criticality}]
Let $\bl_{\ast}^{m}$ be a modulation of $\bl_{\ast}$, and fix $s\in[0,s_{\ast})$.
As observed earlier, $s\bl_{\ast}^{m}/s_{\ast}$ is a modulation of $\bl_{s}$, and $\bl_{s}\in\mathcal{M}_{\ast}$.
Thus, the bound \eqref{eqn:zero-freeness_desired-bound} holds:
\begin{equation}
\abs{\log Z_{\Lambda}(s\bl_{\ast}^{m}/s_{\ast})}\le c\abs{\Lambda}.
\end{equation}
Using the elementary inequality $\abs{\log\abs{z}}\le\abs{\log z}$ which holds for all $z\in\C\setminus\set{0}$, we get that 
\begin{equation}
\label{eqn:zero-freeness_Z-two-sided-bound}
e^{-c\abs{\Lambda}}\le\abs{Z_{\Lambda}(s\bl_{\ast}^{m}/s_{\ast})}\le e^{c\abs{\Lambda}}.
\end{equation}
By Lemma \ref{lem:zero-freeness_Z-uniform-continuity}, $Z_{\Lambda}(\bl_{\ast}^{m})=\lim_{s\rightarrow s_{\ast}}Z_{\Lambda}(s\bl_{\ast}^{m}/s_{\ast})$ again satisfies the bound \eqref{eqn:zero-freeness_Z-two-sided-bound}.
In particular, this shows that $Z(\bl_{\ast}^{m})\ne 0$.

To prove that $\rho_{\bl_{\ast}^{m}}$ takes values in $\overline{U}_{2}$, we fix $v\in\R^{d}$ and observe that, again by Lemma \ref{lem:zero-freeness_Z-uniform-continuity},
\begin{equation}
\rho_{\bl_{\ast}^{m}}(v)
=\bl_{\ast}^{m}(v)\frac{Z(\bl_{\ast}^{m}e^{-\phi(v-\cdot)})}{Z(\bl_{\ast}^{m})}
=\lim_{s\rightarrow s_{\ast}}s\bl_{\ast}^{m}(v)/s_{\ast}\frac{Z(s\bl_{\ast}^{m}e^{-\phi(v-\cdot)}/s_{\ast})}{Z(s\bl_{\ast}^{m}/s_{\ast})}
=\lim_{s\rightarrow s_{\ast}}\rho_{s\bl_{\ast}/s_{\ast}}(v).
\end{equation}
Since $\rho_{s\bl_{\ast}/s_{\ast}}$ takes values in $\overline{U}_{2}$ and $\overline{U}_{2}$ is closed, we conclude that $\rho_{\bl_{\ast}^{m}}(v)\in\overline{U}_{2}$, as desired.
\end{proof}

\subsection{Origin of the temperature dependence}
\label{sec:zero-freeness_origin}

With hindsight, we now reflect on the origin of the strong temperature dependence of our improvement factor of $e^{2-2W(e\aconst/\pwcc)}$ over the classical Penrose-Ruelle bound.
We focus on two milestones in this paper: Theorem \ref{thm:zero-freeness} where most of the action took place, and Proposition \ref{prop:contraction-properties_optimization-problem} which gives purely computational evidence for the inevitability of the temperature dependence.

We first note that although Theorem \ref{thm:zero-freeness} boils down to a statement \eqref{eqn:zero-freeness_property} about the values of certain one-point densities, yet nowhere in the proof have we ever actually computed any one-point density explicitly, except in the case of zero activity.
There is thus no a priori reason why, at nonzero activity, all the one-point densities should take values in the seemingly independently determined set $\overline{U}_{2}$.
Rather, the proof of Theorem \ref{thm:zero-freeness} makes it clear that this information only arises from examining the hierarchy of modulations of the fixed activity function $\bl$.
Essentially, the proof consists in arguing that the one-point density of every modulation $\bl^{m}$ of $\bl$ takes values in $\overline{U}_{2}$ because so is the case for all the $k$th descendants of $\bl^{m}$ in the hierarchy (the interpolation from zero activity makes all this rigorous).
In some sense, this involves taking information from infinitely far below the root of the hierarchy $\bl$ and passing it up the hierarchy $k$ steps at a time to ultimately establish control over the behavior of the root.

With this intuitive picture, we now examine the respective role of the self-map and contraction conditions in the argument.
The self-map condition \eqref{eqn:contraction-properties_real-contraction_self-map-condition} is first used in Lemma \ref{lem:contraction-properties_real-contraction_self-map-preservation} and Proposition \ref{prop:contraction-properties_real-contraction_multiple-iteration} to iterate the bound in Lemma \ref{lem:contraction-properties_real-contraction_single-iteration}.
In other words, it allows us to bridge between all the parents and children in the hierarchy at a uniform activity $\tl$, provided that there exists a corresponding $\tz>0$ such that the pair $(\tl,\tz)$ satisfies the self-map condition.
Indeed, in the case that $\aconst>0$, if we remove the self-map condition from consideration, then the exponential in \eqref{lem:contraction-properties_real-contraction_self-map-preservation_one-step} can in principle cause the information $\pi_{\bl,\vec{\tau},\vec{\gamma}}(v_{0},\dots,v_{j-1},w)$ from each child to ``grow'' a bit as it is passed upward to the parent node $(v_{0},\dots,v_{j-1})$.
Thus, it is possible that the information from the node $(v_{0},\dots,v_{j-1},w)$ will have ``overgrown'' so much by the time it reaches the root that nothing useful about the root may be extracted from there, let alone if we were to try to pass information upwards from infinitely far below as discussed earlier.
On the other hand, the role of the contraction condition \eqref{eqn:contraction-properties_real-contraction_contraction-condition} is much more concrete: with the nodes in the hierarchy organically connected by the self-map condition, the contraction condition allows us to control the behavior at any node using information passed upwards from its $k$th children. 
Therefore, both the self-map and contraction conditions appear indispensable in the current formulation of the argument.

This brings us to Proposition \ref{prop:contraction-properties_optimization-problem} where we studied the interplay of the two conditions and the implications for the activity threshold below which the argument applies.
The dependence of the contraction condition on the temperature is rather innocuous, since only $\pwcc$ has such a dependence and even that is implicit.
In stark contrast, the temperature dependence of the self-map condition is quite extreme.
As soon as we exit the purely repulsive setting studied by Michelen and Perkins into a regime where $\aconst>0$, the exponential factor in \eqref{eqn:contraction-properties_real-contraction_self-map-condition} becomes significant and grows rapidly as we increase $\beta$ (i.e., lowering the temperature).
As we computed in Proposition \ref{prop:contraction-properties_optimization-problem}, this unfortunate behavior of the self-map condition ends up taking a heavy toll on the performance of this approach in the low-temperature regime, as the overall improvement factor of $e^{2-2W(e\aconst/\pwcc)}$ tends to $1$ as $\beta\rightarrow\infty$.
On the other hand, this approach works much better in the high-temperature regime where $\aconst\ll 1$ so that the adverse effect of the self-map condition is almost negligible.

\section*{Acknowledgments}

The author thanks Will Perkins for his encouragement in the pursuit of this work and Ron Peled for instructive discussions.
The author also thanks an anonymous referee for useful feedback on an earlier version of this paper, and Ian Jauslin and Joel Lebowitz for helpful comments.

\bibliographystyle{plain}
\bibliography{bibliography.bib}

\begin{thebibliography}{10}

\bibitem{achlioptas2021lov}
Dimitris Achlioptas and Kostas Zampetakis.
\newblock {The Lov\'asz Local Lemma is Not About Probability}.
\newblock {\em arXiv preprint arXiv:2111.08837}, 2021.

\bibitem{anand2023perfect}
Konrad Anand, Andreas G{\"o}bel, Marcus Pappik, and Will Perkins.
\newblock Perfect sampling for hard spheres from strong spatial mixing.
\newblock {\em arXiv preprint arXiv:2305.02450}, 2023.

\bibitem{bryant1968remark}
VW~Bryant.
\newblock A remark on a fixed-point theorem for iterated mappings.
\newblock {\em The American Mathematical Monthly}, 75(4):399--400, 1968.

\bibitem{duminil2012connective}
Hugo Duminil-Copin and Stanislav Smirnov.
\newblock The connective constant of the honeycomb lattice equals
  $\sqrt{2+\sqrt{2}}$.
\newblock {\em Annals of Mathematics}, pages 1653--1665, 2012.

\bibitem{fernandez2007analyticity}
Roberto Fern{\'a}ndez, Aldo Procacci, and Benedetto Scoppola.
\newblock The analyticity region of the hard sphere gas. improved bounds.
\newblock {\em Journal of Statistical Physics}, 128:1139--1143, 2007.

\bibitem{fisher1964free}
Michael~E Fisher.
\newblock The free energy of a macroscopic system.
\newblock {\em Archive for Rational Mechanics and Analysis}, 17:377--410, 1964.

\bibitem{georgii1997stochastic}
Hans-Otto Georgii and Torsten K{\"u}neth.
\newblock Stochastic comparison of point random fields.
\newblock {\em Journal of Applied Probability}, 34(4):868--881, 1997.

\bibitem{groeneveld1962two}
J~Groeneveld.
\newblock Two theorems on classical many-particle systems.
\newblock {\em Phys. Letters}, 3, 1962.

\bibitem{guo2018uniqueness}
Heng Guo and Pinyan Lu.
\newblock Uniqueness, spatial mixing, and approximation for ferromagnetic
  2-spin systems.
\newblock {\em ACM Transactions on Computation Theory (TOCT)}, 10(4):1--25,
  2018.

\bibitem{jenssen2024quasipolynomial}
Matthew Jenssen, Marcus Michelen, and Mohan Ravichandran.
\newblock {Quasipolynomial-time algorithms for Gibbs point processes}.
\newblock {\em Combinatorics, Probability and Computing}, 33(1):1--15, 2024.

\bibitem{lebowitz1999liquid}
JL~Lebowitz, A~Mazel, and E~Presutti.
\newblock Liquid--vapor phase transitions for systems with finite-range
  interactions.
\newblock {\em Journal of statistical physics}, 94:955--1025, 1999.

\bibitem{lewin2015crystallization}
Mathieu Lewin and Xavier Blanc.
\newblock The crystallization conjecture: a review.
\newblock {\em EMS Surveys in Mathematical Sciences}, 2(2):255--306, 2015.

\bibitem{li2013correlation}
Liang Li, Pinyan Lu, and Yitong Yin.
\newblock Correlation decay up to uniqueness in spin systems.
\newblock In {\em Proceedings of the twenty-fourth annual ACM-SIAM symposium on
  Discrete algorithms}, pages 67--84. SIAM, 2013.

\bibitem{mayer1941molecular}
Joseph~E Mayer and Elliott Montroll.
\newblock Molecular distribution.
\newblock {\em The Journal of Chemical Physics}, 9(1):2--16, 1941.

\bibitem{michelen2021potential}
Marcus Michelen and Will Perkins.
\newblock {Potential-weighted connective constants and uniqueness of Gibbs
  measures}.
\newblock {\em arXiv preprint arXiv:2109.01094}, 2021.

\bibitem{michelen2022strong}
Marcus Michelen and Will Perkins.
\newblock Strong spatial mixing for repulsive point processes.
\newblock {\em Journal of Statistical Physics}, 189(1):9, 2022.

\bibitem{michelen2023analyticity}
Marcus Michelen and Will Perkins.
\newblock Analyticity for classical gasses via recursion.
\newblock {\em Communications in Mathematical Physics}, 399(1):367--388, 2023.

\bibitem{penrose1963convergence}
Oliver Penrose.
\newblock Convergence of fugacity expansions for fluids and lattice gases.
\newblock {\em Journal of Mathematical Physics}, 4(10):1312--1320, 1963.

\bibitem{peters2019conjecture}
Han Peters and Guus Regts.
\newblock {On a conjecture of Sokal concerning roots of the independence
  polynomial}.
\newblock {\em Michigan Mathematical Journal}, 68(1):33--55, 2019.

\bibitem{procacci2017convergence}
Aldo Procacci and Sergio~A Yuhjtman.
\newblock {Convergence of Mayer and virial expansions and the Penrose
  tree-graph identity}.
\newblock {\em Letters in Mathematical Physics}, 107:31--46, 2017.

\bibitem{ruelle1963correlation}
David Ruelle.
\newblock Correlation functions of classical gases.
\newblock {\em Annals of Physics}, 25(1):109--120, 1963.

\bibitem{ruelle1969statistical}
David Ruelle.
\newblock {\em Statistical mechanics: Rigorous results}.
\newblock World Scientific, 1969.

\bibitem{sinclair2017spatial}
Alistair Sinclair, Piyush Srivastava, Daniel {\v{S}}tefankovi{\v{c}}, and
  Yitong Yin.
\newblock Spatial mixing and the connective constant: Optimal bounds.
\newblock {\em Probability Theory and Related Fields}, 168:153--197, 2017.

\bibitem{weitz2006counting}
Dror Weitz.
\newblock Counting independent sets up to the tree threshold.
\newblock In {\em Proceedings of the thirty-eighth annual ACM symposium on
  Theory of computing}, pages 140--149, 2006.

\bibitem{widom1970new}
Benjamin Widom and John~S Rowlinson.
\newblock New model for the study of liquid--vapor phase transitions.
\newblock {\em The Journal of Chemical Physics}, 52(4):1670--1684, 1970.

\bibitem{yang1952statistical}
Chen-Ning Yang and Tsung-Dao Lee.
\newblock Statistical theory of equations of state and phase transitions. i.
  theory of condensation.
\newblock {\em Physical Review}, 87(3):404, 1952.

\end{thebibliography}

\end{document}